\definecolor{DarkBlue}{rgb}{0.1,0.1,0.5}
\definecolor{DarkGreen}{rgb}{0.1,0.5,0.1}
\newcounter{casenum}
\DeclareMathOperator*{\argmax}{arg\,max}
\DeclareMathOperator*{\argmin}{arg\,min}
\DeclareMathOperator*{\integers}{\mathbb{Z}}
\DeclareMathOperator*{\lo}{LO}
\DeclareMathOperator*{\dist}{dist}
\DeclareMathOperator*{\res}{Res}
\renewcommand{\algorithmiccomment}[1]{\bgroup\hfill\small{$\blacktriangleright$ #1}\egroup}
\newcommand{\eqx}{\textrm{EQx}}
\newcommand{\efx}{\textrm{EFX}}
\newcommand{\ef}{\textrm{EF}}
\newcommand{\struct}[1]{\ensuremath{\textrm{St}_{#1}}}%
\newcommand{\cac}[2]{\textsc{Cut}(#1,#2)}
\newtheorem{theorem}{Theorem}
\newtheorem{claim}[theorem]{Claim}
\newtheorem{lemma}[theorem]{Lemma}
\newtheorem{proposition}[theorem]{Proposition}
\newtheorem{definition}{Definition}
\title{\bfseries EFX Allocations on Some Multi-graph Classes}
\author{
Umang Bhaskar\thanks{Tata Institute of Fundamental Research. {\tt umang@tifr.res.in}} \quad Yeshwant Pandit\thanks{Tata Institute of Fundamental Research. {\tt yeshwant.pandit@tifr.res.in}} 
}
\date{}
\begin{document}

\maketitle




\begin{abstract}
    The existence of EFX allocations is one of the most significant open questions in fair division. Recent work by Christodolou, Fiat, Koutsoupias, and Sgouritsa (``Fair allocation in graphs'', EC 2023) establishes the existence of EFX allocations for graphical valuations, when agents are vertices in a graph, items are edges, and each item has zero value for all agents other than those at its end-points. Thus in this setting, each good has non-zero value for at most two agents, and there is at most one good valued by any pair of agents. This marks one of the few cases when an exact and complete EFX allocation is known to exist for arbitrary agents. 
    
    In this work, we extend these results to multi-graphs, when each pair of vertices can have more than one edge between them. The existence of EFX allocations in multi-graphs is a natural open question given their existence in simple graphs. We show that EFX allocations exist, and can be computed in polynomial time, for agents with cancellable valuations in the following cases: (i) bipartite multi-graphs, (ii) multi-trees with monotone valuations, and (iii) multi-graphs with girth $(2t-1)$, where $t$ is the chromatic number of the multi-graph. The existence in multi-cycles follows from (i) and (iii).
\end{abstract}

\section{Introduction}

We study the existence and computation of allocations that are envy-free up to any item, or EFX, in instances that can be represented as multi-graphs. The question of whether EFX allocations exist in general instances is possibly the most tantalising open question in the fair allocation of resources, and many recent papers have focused on this question, leading to proofs of its existence in ever-larger classes of instances.

A fair division instance consists of a set of items that is to be allocated to a set of agents, with each item allocated to an agent. Each agent has a monotone nondecreasing value function over the set of items. Given an allocation, agent $i$ \emph{envies} agent $j$ if agent $i$'s value for its own allocation is less than its value for the set of items allocated to $j$. An envy-free allocation is one where no agent envies another. Envy-free allocations do not exist in general, and the closest relaxation then studied is envy-freeness upto any good, or EFX. An EFX allocation allows an agent to envy another, but the removal of \emph{any} item from the envied agent's allocation should eliminate the envy as well. 

EFX allocations were first studied by Caragiannis et al.~\cite{CKM+19unreasonable}, who established its relation to other notions of fairness, but left the question of existence open. If all agents have the same valuation function, then an EFX allocation has been shown to exist~\cite{PlautR20}. This can be extended to show the existence for two agents with distinct valuations. This was extended to multiple agents with two distinct additive valuations~\cite{Mahara23}, then to three agents when at least one has an MMS-feasible valuation~\cite{AkramiACGMM23,ChaudhuryGM24}, and recently to multiple agents with three distinct additive valuations~\cite{PrakashGNV24}. A number of papers also study partial EFX allocations, when some items may remain unallocated, e.g.,~\cite{ChaudhuryKMS21,BergerCFF22}, as well as approximately EFX allocations~\cite{AmanatidisFS24,AmanatidisMN20,Chan0LW19,PlautR20}.

Given the difficulties in establishing the existence of EFX allocations with multiple agents, researchers study special cases of structured valuations. One such class of structured valuations is \emph{graphical valuations} when the instance can be represented as a graph~\cite{ChristodolouFKS23}. Here the agents correspond to the vertices of the graph, and each item is an edge $e = \{u,v\}$. The restriction is that each item has nonzero marginal value for only the agents at its end-points, and has zero value for all other agents. Thus if $e = \{i,j\}$, then for every other agent $k$ and subset $S$ of items, $v_k(S \cup \{e\}) = v_k(S)$. Further, the graph is simple, and thus there is at most one good valued by every pair of agents. For this case, and agents with monotone valuations over incident edges, Christodolou et al.~\cite{ChristodolouFKS23} show that EFX allocations exist, marking one of the rare cases where EFX allocations are shown to exist with multiple agents.

A natural question with graphical valuations is if an EFX orientation exists--- an EFX allocation where goods are only allocated to agents at their endpoints. Christodolou et al. show that an EFX orientation may not exist, and it is NP-hard to determine if a given instance has an EFX orientation~\cite{ChristodolouFKS23}. Zeng and Mehta characterize graphs that permit EFX orientations~\cite{ZengM24}.

Given the existence of EFX in simple graphs, a natural question is if these results can be extended to multi-graphs when there are multiple edges between vertices (equivalently, pairs of agents can value multiple goods). Here a $2/3$-approximate EFX was shown to exist~\cite{AmanatidisFS24}, beating the earlier bound of $0.618$ for general additive valuations. In independent and concurrent work, Afshinmehr et al.~\cite{AfshinmehrDKMR24} show that exact EFX allocations exist in bipartite multi-graphs and multi-cycles. They also characterize when EFX orientations exist in bipartite graphs, and show it is NP-hard to decide if a given instance admits an EFX orientation.

\subsection{Our contribution}

We study EFX allocations in multi-graphs, and show that EFX allocations exist in large and important subclasses of multi-graphs. Our results are algorithmic, and for agents with cancellable valuations (defined in the next section), we show an EFX allocation can be obtained in polynomial time. A feature of our algorithms is that they are iterative and fairly simple to state, and are based on the well-known cut-and-choose paradigm. While the algorithms are natural, the analysis is complicated and somewhat subtle.

\begin{enumerate}
    \item Firstly, we show for bipartite multi-graphs with cancellable valuations, an EFX allocation can be obtained in polynomial time.
    \item Secondly, we show that for multi-trees, i.e., when there are no cycles of length three or larger, EFX allocations exist for general monotone valuations. Since multi-trees are also bipartite graphs, for multi-trees with cancellable valuations, a polynomial time algorithm immediately follows.
    \item Lastly, we generalize the result for bipartite multi-graphs, and show that for multi-graphs with girth at least $(2t-1)$, where $t$ is the chromatic number of the graph, EFX allocations can be obtained in polynomial time for agents with cancellable valuations. Our algorithm here is a natural extension of the algorithm for bipartite multi-graphs, though the analysis is more complicated.
\end{enumerate}

We note that any multi-cycle is 3-colourable. Hence our last result shows the existence of EFX allocations in multi-cycles of length 5 or greater. A 2-colourable multi-cycle is bipartite, for which existence follows from the first result. The existence in multi-cycles of length 3 is known, and hence our work also shows the existence of EFX allocations for multi-cycles.

For the last result, we assume that we are given a $t$-colouring of the graph. A concrete example of such a graph is the Petersen graph, with chromatic number 3 and girth 5. Graphs of large chromatic number and girth are very interesting mathematically and are often used as counterexamples for various conjectures. A number of papers study the construction of such graphs. For example, Erd\H{o}s shows that for any integers $k$ and $l$, there exist graphs with chromatic number at least $k$ and girth at least $l$ using the probabilistic method (e.g.,~\cite{AlonS16}). Explicit constructions 
 of such graphs are also known, including through the use of Ramanujan graphs.    

\subsection{Comparison to work by Afshinmehr et al.~\cite{AfshinmehrDKMR24}}
\label{sec:comparison}

As mentioned, the problem of obtaining EFX allocations in bipartite multi-graphs is also studied independently and concurrently by Afshinmehr et al.~\cite{AfshinmehrDKMR24}, who show that EFX allocations exist in bipartite multi-graphs and multi-cycles. It is instructive to compare our algorithm for bipartite multi-graphs with theirs. Their algorithm is stated as a two-phase procedure similar to earlier work~\cite{ChristodolouFKS23}, where in the first phase an EFX orientation with certain additional properties is found, and then the EFX orientation is converted into a complete allocation through the use of `safe' vertices. Stating the requirements of the first phase for EFX orientations requires a number of additional definitions, including the notion of available bundles. However, it turns out that the allocations obtained through both algorithms are similar, though not identical. E.g., in Section 4.4, Case 2, the algorithm by Afshinmehr et al. allocates edges differently from our algorithm. For bipartite multi-graphs, this does not make a difference, but extending this to multi-graphs could possibly complicate the analysis. 

We believe our algorithm and analysis for bipartite multi-graphs to be much simpler and clearer. This allows us to extend the algorithm in two different directions. In Section~\ref{sec:tree}, we show that the algorithm can be modified to show the existence of EFX allocations in tree multi-graphs with arbitrary monotone valuations. Secondly, in Section~\ref{sec:chromatic}, we extend the algorithm to show the existence of EFX allocations in multi-graphs with girth at least $2t-1$, with chromatic number $t$. The algorithm for this latter result is in fact a fairly straightforward extension of our algorithm for biparitite multi-graphs. However, the analysis is more complicated and requires keeping track of the bundles as they move around the multi-graph. The algorithm as stated in Afshinmehr et al. does not immediately lend itself to any such natural extension beyond bipartite multi-graphs. 

Lastly, as stated, Afshinmehr et al. also show the existence of EFX allocations in multi-cycles. We obtain the same result as an implication of our last result, for $t$-chromatic multi-graphs.

\section{Notation and Preliminaries}

\paragraph{multi-graphs.} A multi-graph $G = (V,E)$ is a graph with possibly multiple edges between each pair of vertices. A function $r$ maps each edge $e \in E$ to an unordered pair of vertices, which are its endpoints. We use $E_{u,w}$ for the set of parallel edges between the vertices $u$, $w$ (hence $E_{u,w} = E_{w,u}$). We use $n := |V|$ and $m := |E|$. For a vertex $u$, $N_u$ is the set of neighbours $\{w \in V: \, \exists e \in E, \, r(e) = \{u,w\}\}.$ 

Given a multi-graph, we can define paths and cycles as in simple graphs without parallel edges. A path $P= (v_1, \ldots, v_k)$ is a sequence of distinct vertices so that each pair of consecutive vertices are neighbours. The length of a path is one less than the number of vertices. A cycle $C = (v_1, \ldots, v_{k-1}, v_k)$ is a sequence of distinct vertices so that each pair of consecutive vertices are neighbours, and $v_1$, $v_k$ are neighbours. 

\paragraph{Fair division.} In a fair division instance on a multi-graph $G=(V,E)$, each vertex $u \in V$ corresponds to an agent $u$, and each edge $e$ corresponds to a good. Throughout this article, we will use agents and vertices interchangeably, and edges and goods interchangeably. Each agent $u$ has a valuation function $v_u: 2^E \rightarrow \integers_+$ that maps subsets of goods to an integer value. A \emph{bundle} $S \subseteq E$ is simply a subset of goods.

The multi-graph restricts the valuation functions in the following way: Each agent $u \in V$ has zero marginal value for any good not adjacent to it. Thus for a good $g$ if $u \not \in r(g)$, then for any subset of goods $S \subseteq E$, $v_u(S \cup \{g\}) = v(S)$. We say that agent $u$ \emph{values} a good $g$ if $u \in r(g)$, i.e., the edge $g$ is adjacent to $u$, else it does not value the good $g$. Extending this definition, we say that agent $u$ values a bundle $S \subseteq E$ if $S$ contains some good that $u$ values, else $u$ does not value the bundle $S$.

An allocation $A = (A_u)_{u \in V}$ is a partition of the set of items (or edges) among the agents. A partial allocation is a partition of a subset of items. An allocation $A$ is \emph{envy-free} (EF) if for all agents $u$, $w$, $v_u(A_u) \geq v_u(A_w)$.  An allocation $A$ is \emph{envy-free up to any good} (EFX) if for all agents $u$, $w$, either $v_u(A_u) \ge v_u(A_w)$, or $v_u(A_u) \geq v_u(A_w \setminus \{x\})$, for all $x \in A_w$.

Given an allocation $A$ (or a partial allocation), we can define an \emph{envy graph} $G_A$ among the agents: The vertices of $G_A$ are the agents, and there is a directed edge $(u,w)$ if agent $u$ envies agent $w$, i.e., $v_u(A_u) < v_u(A_w)$. An orientation is an allocation where each item is allocated to an agent that values it (thus if $r(e) = \{u,w\}$, then $e \in A_u \cup A_w$).

Given an allocation $A$ and the corresponding envy graph $G_A$, let $C = (u_1, u_2, \ldots, u_k)$ be a directed cycle in the envy graph. We use the term \emph{resolving the envy cycle $C$} to refer to the modification that exchanges bundles along the cycle: agent $u_k$ gets the bundle currently allocated to $u_1$, and each agent $u_i$ gets the bundle currently allocated to $u_{i+1}$, for $i < k$. The modified allocation is called $A^C$. 

\paragraph*{Valuation functions.}

We will be interested in a few different classes of valuation functions. A valuation function $v:2^E \rightarrow \integers_+$ is \emph{monotone nondecreasing} (or just monotone) if for every $S \subseteq E$ and $g \in E$, $v(S \cup \{g\}) \ge v(S)$.

A monotone valuation function is \emph{cancellable} if for $S$, $T \subseteq E$ and $g \not \in S \cup T$, if $v(S) \ge v(T)$ then $v(S \cup \{g\}) \ge v(T \cup \{g\})$. These functions were defined in prior work on EFX allocations for four agents~\cite{BergerCFF21,BergerCFF22}. We only consider monotone cancellable functions, and hence say cancellable to mean monotone cancellable functions. We in particular need the following two properties of cancellable functions.

\begin{proposition}
Let $v$ be a cancellable function. Then given sets of goods $S_1$, $S_2$, $T_1$, $T_2$ so that set $S_1$ and $S_2$ are disjoint, sets $T_1$, $T_2$ are disjoint, and $v(S_1) \ge v(T_1)$, $v(S_2) \ge v(T_2)$, it follows that $v(S_1 \cup S_2) \ge v(T_1 \cup T_2)$. 
    \label{prop:cancellable-1}
\end{proposition}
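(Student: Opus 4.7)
The plan is to establish $v(S_1 \cup S_2) \ge v(T_1 \cup T_2)$ by a two-step chain of inequalities that both pass through the common intermediate set $T_1 \cup S_2$. Each step is obtained from one of the two hypotheses by applying cancellability one element at a time. Before starting, I would single out the set $K := S_2 \cap T_1$, the elements lying in ``opposite'' sides of the two hypotheses; these are the awkward elements that have to be removed from both sides early in the argument and reinstated at the very end.

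For the first step, I would start from $v(S_1) \ge v(T_1)$ and add the elements of $X := S_2 \setminus T_1$ one at a time. Each such element lies in $S_2$, so by $S_1 \cap S_2 = \emptyset$ it is outside $S_1$, and by the definition of $X$ it is outside $T_1$; thus cancellability applies at every step, yielding $v(S_1 \cup X) \ge v(T_1 \cup X)$, which simplifies (using $S_1 \cap S_2 = \emptyset$) to $v((S_1 \cup S_2) \setminus K) \ge v(T_1 \cup S_2)$. Symmetrically, starting from $v(S_2) \ge v(T_2)$ and adding the elements of $Y := T_1 \setminus S_2$ one at a time---each is outside $T_2$ because $T_1 \cap T_2 = \emptyset$, and outside $S_2$ by the definition of $Y$---I obtain $v(T_1 \cup S_2) \ge v((T_1 \cup T_2) \setminus K)$. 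Chaining the two yields $v((S_1 \cup S_2) \setminus K) \ge v((T_1 \cup T_2) \setminus K)$, and a final application of cancellability that reinserts the elements of $K$ into both sides (legal because $K$ has just been removed from both) delivers the desired $v(S_1 \cup S_2) \ge v(T_1 \cup T_2)$.

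The main obstacle is purely bookkeeping. The elements of $K = S_2 \cap T_1$ sit inside $S_2$ on the ``$S$-side'' and inside $T_1$ on the ``$T$-side,'' so they violate the ``$g \notin S \cup T$'' precondition of cancellability if one naively tries to augment $v(S_1) \ge v(T_1)$ by all of $S_2$ at once, or $v(S_2) \ge v(T_2)$ by all of $T_1$ at once. The trick is to route the argument through the symmetric intermediate $v(T_1 \cup S_2)$, which absorbs $K$ on both sides from the very start, and to postpone reinserting $K$ until the final step; once this routing is set up, every individual application is just a single invocation of the cancellability axiom.
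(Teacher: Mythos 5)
The paper states \Cref{prop:cancellable-1} without giving a proof, so there is no in-paper argument to compare against; your proposal supplies a correct and complete proof. Each application of cancellability is legitimate: in the first step every element of $X = S_2 \setminus T_1$ lies outside $S_1$ (by $S_1 \cap S_2 = \emptyset$) and outside $T_1$, in the second step every element of $Y = T_1 \setminus S_2$ lies outside $S_2$ and outside $T_2$ (by $T_1 \cap T_2 = \emptyset$), and the set identities $S_1 \cup X = (S_1 \cup S_2)\setminus K$, $T_1 \cup X = S_2 \cup Y = T_1 \cup S_2$, $T_2 \cup Y = (T_1\cup T_2)\setminus K$ all check out, as does the final reinsertion of $K = S_2 \cap T_1$, which is disjoint from both sides of the chained inequality by construction. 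It is worth noting that your routing through $T_1 \cup S_2$ with $K$ removed is genuinely needed for the statement as written: the hypotheses only make $S_1, S_2$ and $T_1, T_2$ disjoint, so the naive argument (augment $v(S_1)\ge v(T_1)$ by all of $S_2$, augment $v(S_2)\ge v(T_2)$ by all of $T_1$, and chain) silently assumes $S_2 \cap T_1 = \emptyset$; your bookkeeping with $K$ closes exactly that gap, and the remaining possible overlaps ($S_1 \cap T_1$, $S_1 \cap T_2$, $S_2 \cap T_2$) never interfere with any cancellation step, as your argument implicitly reflects.
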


For two agents, EFX allocations are known to exist for monotone valuations~\cite{PlautR20}. It is known however that even for two agents with submodular valuations, computing an EFX allocation is PLS-complete and requires an exponential number of queries~\cite{GoldbergHH23}. However, for cancellable valuations, we can compute an EFX allocation in polynomial time~\cite{GoldbergHH23}.\footnote{The actual result holds for the more general class of \emph{weakly well-layered} valuations.} 

\begin{proposition}
\label{prop:cancellable-2agent-efx}
For two identical agents with cancellable valuations, an EFX allocation can be computed in polynomial time.
\end{proposition}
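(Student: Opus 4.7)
Since the two agents share a common valuation $v$, it suffices to construct a partition $(X, Y)$ of $E$ with $v(X) \ge v(Y)$ and $v(X \setminus \{g\}) \le v(Y)$ for every $g \in X$; giving $X$ to one agent and $Y$ to the other then yields an EFX allocation, because the agent receiving $X$ has no envy while the agent receiving $Y$ envies only up to any single good. The task therefore reduces to computing such a ``balanced'' partition in polynomial time.

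The plan is to run an iterative local-search procedure. Initialize $X := E$ and $Y := \emptyset$; while the EFX condition is violated --- that is, while there is a witness $g \in X$ with $v(X \setminus \{g\}) > v(Y)$ --- I would transfer such a $g$ from $X$ to $Y$, relabeling the two sides afterwards if necessary to maintain the invariant $v(X) \ge v(Y)$. At termination the EFX condition holds by construction, so correctness of the output reduces to showing that the loop halts, and halts quickly.

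The main obstacle is bounding the number of iterations by a polynomial in the input size. I would first observe that the monovariant $\min(v(X), v(Y))$ strictly increases at each step: if after the transfer we have $v(X \setminus \{g\}) \ge v(Y \cup \{g\})$ then the new minimum equals $v(Y \cup \{g\})$, which strictly exceeds $v(Y)$ whenever $g$ has positive marginal value (goods of zero marginal can be set aside harmlessly since they do not affect EFX); otherwise, after the relabel, the new minimum is $v(X \setminus \{g\}) > v(Y)$ by the choice of $g$. This already gives finite termination.

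The sharper polynomial bound --- which is the technical heart of the argument --- is obtained via the leximin-style algorithm of Goldberg, Hollender, and Hollender~\cite{GoldbergHH23}, which uses Proposition~\ref{prop:cancellable-1} and the structure of cancellable (indeed weakly well-layered) valuations to control the evolution of bundle values across iterations and bound the number of local-search steps polynomially in $m$. Since each iteration checks the EFX condition and locates a witness using $O(m)$ value queries, the overall procedure runs in polynomial time.
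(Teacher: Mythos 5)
There is a genuine gap here, and it is worth being precise about what the paper actually does: it gives no proof of Proposition~\ref{prop:cancellable-2agent-efx} at all, but imports it from~\cite{GoldbergHH23}, whose polynomial-time algorithm for two identical agents works for the more general class of weakly well-layered valuations (which contains cancellable valuations). Your write-up, by contrast, introduces a transfer-based local search of your own and then, at the decisive moment, also appeals to~\cite{GoldbergHH23} to get the polynomial bound. The part of the argument that is genuinely yours is exactly the part that does not suffice, and the part that suffices is the citation the paper already uses.

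Concretely: your own analysis is at best pseudo-polynomial. The monovariant $\min(v(X),v(Y))$ is a nonnegative integer bounded by $v(E)$, so even if it increased strictly in every round you would only get at most $v(E)$ rounds, which can be exponential in the bit-length of the valuations; nothing in your argument reduces this to a polynomial in $m$. Moreover, strict increase can fail: a witness $g$ with $v(X \setminus \{g\}) > v(Y)$ may have zero marginal value with respect to the current $Y$, in which case the minimum does not move, and ``setting such goods aside'' mid-run is not obviously harmless, because a good with zero marginal value for the current $Y$ can have positive marginal value for whatever bundle it is appended to later (one can only argue it is safe if it still has zero marginal value for the final poorer bundle). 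Finally, the appeal to~\cite{GoldbergHH23} is misapplied: their polynomial-time result analyzes a different, greedy item-by-item procedure (repeatedly giving the currently poorer of two identical bundles its best remaining item), exploiting the weakly well-layered structure; it is not a ``leximin-style'' existence argument in the sense of Plaut--Roughgarden, and its analysis says nothing about the number of steps of your transfer/relabel local search, so it cannot be used to bound your iterations. (The author list is also misstated; the reference is Goldberg, H{\o}gh, and Hollender.) To match the paper, simply cite~\cite{GoldbergHH23}, or reproduce their greedy algorithm and its analysis; if you want your local search to stand on its own, you must supply an actual polynomial bound on its iterations, which is currently missing.
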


\paragraph*{Cut and Choose.}

Cut and choose is a standard procedure to obtain an EFX allocation for two agents. Given agents $u$ and $w$ and a set of items $S$, one agent `cuts' and the other agent `chooses.' If agent $u$ `cuts', we assume there are two identical agents with valuation function $v_u$, and use the algorithm from~\Cref{prop:cancellable-2agent-efx} to obtain a partition $(S^1,S^2)$ of $S$ that is EFX for identical agents with valuation function $v_u$. Then agent $w$ `chooses' the bundle with a higher value for it, while agent $u$ is assigned the other bundle. The allocation is clearly envy-free for agent $w$, and is EFX for agent $u$. Note that for agents with cancellable valuations, this gives us a polynomial time algorithm to obtain an EFX allocation.

The cut-and-choose procedure is central to our algorithm. In~\Cref{sec:bipartite}, when we deal with bipartite multi-graphs $G=(L \cup R, E)$, for allocating a set of parallel edges $E_{u,w}$ with $u \in L$, $w \in R$, agent $w \in R$ will always cut (thus the agent in the right bipartition always cuts). In~\Cref{sec:chromatic}, when we deal with $t$-chromatic multi-graphs, and partition the vertex set $V = C_1 \cup \ldots \cup C_t$ where vertices in $C_i$ are the same colour, we think of the vertices in $C_i$ as being to the left of $C_{i+1}$. For a set of parallel edges $E_{u,w}$, as in the bipartite case, the vertex to the \emph{right} will always cut.

We use $\cac{u}{S}$ to refer to the procedure where we use the algorithm from~\Cref{prop:cancellable-2agent-efx} to obtain a partition $(S^1,S^2)$ of $S$ that is EFX for identical agents with valuation function $v_u$.

\section{EFX Allocations in Bipartite Multi-graphs}
\label{sec:bipartite}

We first show that for bipartite multi-graphs and agents with cancellable valuations, an EFX allocation always exists, and can be computed in polynomial time. Let $V = L \cup R$ be the bipartition of the vertex set. We first define some notation we will use for this case.

\paragraph*{Structures}

For each vertex $u \in L$, we define the \emph{structure} $\struct{u}$ as the subgraph induced by $u \cup N_u$. Note that since $N_u \subseteq R$, all edges in $\struct{u}$ are between $u$ and vertices in $N_u$. We say $u$ is the \emph{root} of the structure. Each iteration in our algorithm will \emph{resolve} the structure $\struct{u}$, for some $u \in L$, by which we mean that it will assign all the edges in $\struct{u}$. We also say that a vertex $u$ is resolved to mean that $\struct{u}$ is resolved.

For each $w \in N_u$, let $(E_{u,w}^1, E_{u,w}^2) = \cac{w}{E_{u,w}}$ be the partition of $E_{u,w}$ returned when agent $w$ cuts. Define $S_{u,w} := \arg \max \{v_u(E_{u,w}^1), v_u(E_{u,w}^2)\}$ as the bundle preferred by $u \in L$, and $T_{u,w} := \arg \max \{v_w(E_{u,w}^1), v_w(E_{u,w}^2)\}$ as the bundle preferred by $w$. If either agent is indifferent between the two bundles, we break ties so that $S_{u,w} \neq T_{u,w}$. Further, for $S \subseteq E_{u,w}$, we define $\bar{S} = E_{u,w} \setminus S$.

For a structure $\struct{u}$, we further define a \emph{favourite} neighbour $f_u$ as follows:

\[
f_u := \arg \max_{w \in N_u} \max \{v_u(S_{u,w})\} \, .
\]

\noindent Then $f_u$ is the neighbour that offers $u$ the highest-value bundle after cutting the adjacent edges. The other neighbours are simply called \emph{ordinary} neighbours.

Our algorithm is then simple to describe, and is formally given as~\Cref{alg:BipartiteEFX}. We consider the vertices in $L$ in turn. For vertex $u \in L$, we consider the structure $\struct{u}$ and resolve it, as follows. Each neighbour $w \in N_u$ cuts the bundle $E_{u,w}$. Each \emph{ordinary} neighbour $w \neq f_u$ gets their higher-value bundle $T_{u,w}$. Let $\lo_u = \bigcup_{w \in N_u \setminus \{f_u\}} \bar{T}_{u,w}$ be the union of the remaining goods from the ordinary neighbours of $u$. For the favourite neighbour $f_u$, if $S_{u,f_u} \neq T_{u,f_u}$ --- both $u$ and $f_u$ prefer different bundles when $f_u$ cuts $E_{u,f_u}$ --- then $u$ gets $S_{u,f_u} \cup \lo_u$, and $f_u$ gets $T_{u,f_u}$. In this case, everyone gets their largest value bundle, and there is no envy. On the other hand, if $S_{u,f_u} = T_{u,f_u}$ --- both $u$ and $f_u$ prefer the same bundle when $f_u$ cuts $E_{u,f_u}$ --- then $u$ is offered the choice of $S_{u,f_u}$ or $\bar{S}_{u,f_u} \cup \lo_u$. If $u$ prefers the former, then it gets $S_{u,f_u}$, and $f_u$ gets $\bar{S}_{u,f_u} \cup R_u$. In this case, $f_u$ may envy $u$, but this is EFX. If $u$ prefers  $\bar{S}_{u,f_u} \cup \lo_u$, it gets this bundle, and $f_u$ gets $S_{f_u,u} (=T_{u,f_u})$. Again, in this case, there is no envy.

\begin{algorithm}[ht]
\caption{Bipartite-EFX}
\label{alg:BipartiteEFX}
\begin{algorithmic}[1] 
\REQUIRE Bipartite multi-graph $G = (L \cup R, E)$, and vertices with cancellable valuations over $E$.
\ENSURE \efx{} allocation $A$.
\STATE Initially, for each vertex $u$, $A_u=\emptyset$ \label{line:b-init}
\FOR{each vertex $u \in L$} \label{line:b-outerforloop1}
     \FOR{each vertex $w \in N_u$} \label{line:b-innerloop1}
          \STATE $(E_{u,w}^1, E_{u,w}^2)= \cac{w}{E_{u,w}}$. \COMMENT{Returns an EFX partition for identical agents}\label{line:b-cutloop}
          \STATE $S_{u,w}=\argmax_{S \in \{E^{1}_{u,w},E^{2}_{u,w}\}} v_u(S)$
          \STATE $T_{u,w}=\argmax_{S \in \{E^{1}_{u,w},E^{2}_{u,w}\}} v_w(S)$. \COMMENT{Preferred bundles for $u$ and $w$ from $E_{u,w}$}
     \ENDFOR  
     \STATE $f_u=\argmax_{w \in N_u} v_u(S_{u,w})$ \label{line:b-favourite-neighbour}
     \STATE $\lo_u = \bigcup_{w \in N_u \setminus \{f_u\}} \bar{T}_{u,w}$ \COMMENT{Left-over bundles from all ordinary neigbours}
     \STATE $A_w = A_w \cup T_{u,w}$ for each vertex $w \in N_u \setminus \{f_u\}$ \COMMENT{Ordinary neighbours get their preferred bundle} \label{line:b-ordinary-neighbours}
     \IF[Both $u$ and $f_u$ prefer the same bundle in $E_{u,f_u}$]{($S_{u,f_u} = T_{u,f_u}$)}
            \IF[$u$ prefers $S_{u,f_u}$ over all the left-over bundles]{($v_u(S_{u,f_u}) > v_u(\bar{S}_{u,f_u} \cup \lo_u)$)} \label{line:b-condn1-S}
                \STATE $A_u = S_{u,f_u}$
                \STATE $A_{f_u} = A_{f_u} \cup \bar{S}_{u,f_u} \cup \lo_u$
            \ELSE[$u$ prefers the left-over bundles] \label{line:b-condn2-leftovers}
                \STATE $A_u = \bar{S}_{u,f_u} \cup \lo_u$
                \STATE $A_{f_u} = A_{f_u} \cup  S_{u,f_u}$
            \ENDIF
    \ELSE[$u$ and $f_u$ prefer different bundles] \label{line:b-condn3-allok}
        \STATE $A_u = S_{u,f_u} \cup \lo_u$
        \STATE $A_{f_u} = A_{f_u} \cup T_{u,f_u}$
    \ENDIF
\ENDFOR
\RETURN $A$
\end{algorithmic}
\end{algorithm} 

We note the following properties of the algorithm. In each iteration of the outer for loop, a vertex $u \in L$ is chosen, and the goods in $\struct{u}$ are assigned to the agents in $\struct{u}$. The allocation to all other agents remains unchanged. Since $\struct{u}$ does not contain any agent from $L$ other than $u$, an agent in $u' \in L$ has no allocated goods until it is resolved, and the allocation to agent $u'$ is not changed after it is resolved. Finally, no goods are removed from an agent once assigned.

\begin{theorem}
    \Cref{alg:BipartiteEFX} runs in polynomial time, and returns an EFX allocation.
    \label{thm:b-envy}
\end{theorem}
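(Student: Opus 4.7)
The plan is to verify the EFX condition on every ordered pair of agents, using the bipartite structure of $G$ to drastically limit which pairs can exhibit envy, and the case analysis inside the algorithm to handle the rest. Polynomial running time is immediate: each iteration of the outer loop calls $\cac{w}{E_{u,w}}$ once per $w \in N_u$, which is polynomial by \Cref{prop:cancellable-2agent-efx}, and the remainder is bookkeeping. Before the envy analysis I would record two invariants that are immediate from inspection: (i) every vertex $u \in L$ receives its entire bundle during its own iteration, so $A_u \subseteq \struct{u}$, and (ii) at the end of $u$'s iteration every edge of $\struct{u}$ is assigned. Since every edge of $G$ lies in $\struct{u}$ for exactly one $u \in L$, the final allocation is complete.

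For envy originating at $u \in L$, observe that $v_u$ vanishes outside $\struct{u}$, so only neighbours $w \in N_u$ need be considered. For ordinary $w$, the bundle $A_w \cap \struct{u}$ equals $T_{u,w}$, and the maximality of $f_u$ together with $v_u(S_{u,w}) \ge v_u(T_{u,w})$ yields $v_u(A_u) \ge v_u(S_{u,f_u}) \ge v_u(S_{u,w}) \ge v_u(T_{u,w}) = v_u(A_w)$ (where the first inequality needs a one-line check in each of the three subcases defining $A_u$). For $w = f_u$, I would dispatch the subcases directly: subcase 1 uses its own triggering condition $v_u(S_{u,f_u}) > v_u(\bar{S}_{u,f_u} \cup \lo_u)$, subcase 2 uses the reverse triggering condition, and subcase 3 uses $v_u(S_{u,f_u}) \ge v_u(T_{u,f_u})$ together with monotonicity of $\lo_u$.

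For envy originating at $w \in R$, bipartiteness forces $v_w(A_{w'}) = 0$ for all $w' \in R$, so only $u \in L$ with $w \in N_u$ can matter. If $w$ is ordinary, then $A_u \cap E_{u,w}$ is either empty (in favourite-subcase 1) or equals $\bar{T}_{u,w}$ (otherwise, as part of $\lo_u$ absorbed by $u$), and envy-freeness follows from $v_w(A_w) \ge v_w(T_{u,w}) \ge v_w(\bar{T}_{u,w}) \ge v_w(A_u \cap E_{u,w})$. If $w = f_u$, subcases 2 and 3 each give $v_w(A_w) \ge v_w(T_{u,f_u}) \ge v_w(\bar{T}_{u,f_u}) = v_w(A_u \cap E_{u,f_u})$, again envy-freeness.

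The main obstacle, and the only place where mere envy-freeness fails, is subcase 1 of the favourite case: $S_{u,f_u} = T_{u,f_u}$ and $u$ takes the commonly preferred bundle, so $f_u$ may genuinely envy $u$. Here I would invoke the \efx{} guarantee of the partition produced by $\cac{f_u}{E_{u,f_u}}$ for identical $v_{f_u}$-agents, namely $v_{f_u}(\bar{T}_{u,f_u}) \ge v_{f_u}(T_{u,f_u} \setminus \{x\})$ for every $x \in T_{u,f_u}$. Since $\bar{S}_{u,f_u} = \bar{T}_{u,f_u}$ is added to $A_{f_u}$ in this iteration, monotonicity gives $v_{f_u}(A_{f_u}) \ge v_{f_u}(\bar{T}_{u,f_u}) \ge v_{f_u}(T_{u,f_u} \setminus \{x\}) = v_{f_u}(A_u \setminus \{x\})$ for every $x \in A_u = S_{u,f_u}$, establishing \efx{}. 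This is the only place the identical-agents \efx{} property is used non-trivially; combining it with the envy-freeness obtained in every other case closes the proof.
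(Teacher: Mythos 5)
There is a genuine gap in your treatment of envy originating at a right vertex. You assert that ``bipartiteness forces $v_w(A_{w'}) = 0$ for all $w' \in R$,'' but this is false for the allocation produced by \Cref{alg:BipartiteEFX}: in the branch of Line~\ref{line:b-condn1-S}, the favourite neighbour $f_u \in R$ receives $\bar{S}_{u,f_u} \cup \lo_u$, and $\lo_u$ contains $\bar{T}_{u,w}$ for every ordinary right neighbour $w$ of $u$. So an ordinary neighbour $w \in R$ may very well value the bundle of another right vertex $f_u$, and over several iterations a single right vertex can accumulate many such leftover pieces $\bar{T}_{u_1,w}, \bar{T}_{u_2,w}, \ldots$ valued by the same $w$. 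Your case analysis simply omits these $R$--$R$ pairs, so the proof does not establish even envy-freeness, let alone EFX, for them.

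Closing this gap is where the paper's induction earns its keep: one maintains the invariant that before $u$'s iteration no agent of $\struct{u}$ is envied, and then uses cancellability to argue $v_w(A_w) = v_w(\hat{A}_w \cup T_{u,w}) \ge v_w(\hat{A}_z \cup \bar{T}_{u,w}) \ge v_w(A_z)$ for every $z \in \struct{u}$, since in that iteration $w$ gains $T_{u,w}$ while any other single agent gains at most $\bar{T}_{u,w}$ of value to $w$. Note that because valuations are only cancellable (not additive), a static endgame comparison of final bundles does not decompose into per-edge-set comparisons for free; you would need either this iteration-by-iteration argument or an explicit invocation of \Cref{prop:cancellable-1} after pinning down exactly which pieces of each $E_{u,w}$ end up with each right vertex. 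The rest of your argument (polynomial time, completeness of the allocation, envy from $u \in L$, envy from $w \in R$ towards left vertices, and the EFX bound for $f_u$ in the Line~\ref{line:b-condn1-S} branch) matches the paper's reasoning and is fine.
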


The polynomial running time is easily seen, and is because the procedure $\cac{u}{S}$ runs in polynomial time for cancellable valuations. The EFX property follows immediately from the next lemma.

\begin{lemma}
    \label{lem:b-envy}
    Fix an agent $u \in L$. Let $A$ be the allocation after resolving $\struct{u}$. Then the only possible envy in allocation $A$ is that a resolved agent $u' \in L$ is envied by its favourite agent $f_{u'}$. Further, the partial allocation $A$ is EFX.
\end{lemma}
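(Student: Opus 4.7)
The plan is to verify the lemma by an exhaustive case split over ordered pairs of agents $(i,j)$ in the current allocation, showing that each pair either exhibits no envy or realises exactly the claimed EFX envy $f_{u'} \to u'$. Two structural facts drive the analysis. First, once $\struct{u'}$ has been processed, $A_{u'}$ is frozen forever and consists only of edges incident to $u'$, since $u'$ is the unique vertex from $L$ in $\struct{u'}$ and later iterations touch only other structures. Second, for any $w \in R$ the bundle $A_w$ can only grow, and the only increments to $A_w$ containing edges not incident to $w$ are the $\lo_{u_i}$-pieces absorbed when $w = f_{u_i}$ enters the first algorithmic branch; in that situation the extra edges are of the form $\bar{T}_{u_i,v}$ for ordinary neighbours $v \in N_{u_i}\setminus\{w\}$, hence incident to other $R$-vertices.

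With these facts I would split into five pair-types. For distinct $u_1,u_2 \in L$, the independence of $L$ in the bipartite graph makes $A_{u_1}$ contain no edge incident to $u_2$, so $v_{u_2}(A_{u_1}) = 0$ and there is no envy. For any unresolved $u' \in L$, $A_{u'} = \emptyset$ and every edge incident to $u'$ still lies in some $E_{u',w}$ that has not been allocated, so $v_{u'}(A_j) = 0$ for every $j$, giving no envy. For a resolved $u' \in L$ envying some $w \in R$, $v_{u'}(A_w)$ equals $v_{u'}$ of the single cut-piece of $E_{u',w}$ retained by $w$ at the moment $\struct{u'}$ was resolved; a short case check across the three algorithmic branches shows this quantity never exceeds $v_{u'}(A_{u'})$, using the definition of $f_{u'}$ as the neighbour maximising $v_{u'}(S_{u',\cdot})$ together with the cut-and-choose preference $v_{u'}(S_{u',w}) \geq v_{u'}(T_{u',w})$.

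The reverse direction, $w \in R$ possibly envying a resolved $u' \in L$, I would handle by splitting on whether $w \notin N_{u'}$ (trivial, since no edge of $A_{u'}$ is incident to $w$), $w$ is ordinary in $\struct{u'}$ (where $A_w$ already contains $w$'s preferred half $T_{u',w}$ while the part of $A_{u'}$ incident to $w$ is either empty or $\bar{T}_{u',w}$), or $w = f_{u'}$. The last subcase splits further on the algorithmic branch. Branches two and three assign $f_{u'}$ its preferred piece $T_{u',f_{u'}}$, so $v_{f_{u'}}(A_{f_{u'}}) \geq v_{f_{u'}}(T_{u',f_{u'}}) \geq v_{f_{u'}}(\bar{T}_{u',f_{u'}}) = v_{f_{u'}}(A_{u'})$ and there is no envy. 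Only branch one leaves open envy from $f_{u'}$ to $u'$, in which $A_{u'} = S_{u',f_{u'}}$ and $A_{f_{u'}} \supseteq \bar{S}_{u',f_{u'}}$. This envy is EFX because $(S_{u',f_{u'}}, \bar{S}_{u',f_{u'}})$ was produced by $\cac{f_{u'}}{E_{u',f_{u'}}}$, so removing any $x \in A_{u'}$ yields $v_{f_{u'}}(S_{u',f_{u'}} \setminus \{x\}) \leq v_{f_{u'}}(\bar{S}_{u',f_{u'}}) \leq v_{f_{u'}}(A_{f_{u'}})$.

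The main obstacle is the $R$-to-$R$ case, distinct $w_1,w_2 \in R$, because $A_{w_1}$ can genuinely carry edges incident to $w_2$: namely the pieces $\bar{T}_{u_i,w_2}$ from every structure $\struct{u_i}$ in which $w_1 = f_{u_i}$ took the first branch and $w_2$ was an ordinary neighbour. In each such $u_i$ the complementary piece $T_{u_i,w_2}$ was simultaneously placed in $A_{w_2}$. The parts of $A_{w_1}$ not incident to $w_2$ contribute zero to $v_{w_2}$, so $v_{w_2}(A_{w_1}) = v_{w_2}\!\bigl(\bigcup_i \bar{T}_{u_i,w_2}\bigr)$, while monotonicity gives $v_{w_2}(A_{w_2}) \geq v_{w_2}\!\bigl(\bigcup_i T_{u_i,w_2}\bigr)$. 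I would then apply \Cref{prop:cancellable-1} iteratively across the pairwise disjoint edge sets $E_{u_i,w_2}$, using $v_{w_2}(T_{u_i,w_2}) \geq v_{w_2}(\bar{T}_{u_i,w_2})$ from cut-and-choose, to deduce $v_{w_2}\!\bigl(\bigcup_i T_{u_i,w_2}\bigr) \geq v_{w_2}\!\bigl(\bigcup_i \bar{T}_{u_i,w_2}\bigr)$ and conclude no envy. This cancellable-aggregation step is the sole point where a single-structure local argument does not suffice, which is why I single it out as the main obstacle.
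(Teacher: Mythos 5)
Your proposal is correct, but it takes a genuinely different route from the paper. The paper proves \Cref{lem:b-envy} by induction on the iterations of the outer loop, using the lemma statement itself as the invariant: when $\struct{u}$ is resolved, any new envy is localized to agents of $\struct{u}$, and each such agent is dispatched by a \emph{single} application of cancellability that combines the inductive fact ``no agent in $\struct{u}$ was envied'' with the local comparison $v_w(T_{u,w}) \ge v_w(\bar{T}_{u,w})$; the cross-structure accumulation you worry about is thus absorbed into the induction hypothesis and never appears explicitly. You instead argue globally: you first pin down exactly what each bundle can contain at any time ($L$-bundles are frozen and consist only of incident edges; $R$-bundles only grow, and their only non-incident edges are $\lo$-pieces absorbed via Line~\ref{line:b-condn1-S}), and then do an exhaustive pair check, with the $R$-to-$R$ case resolved by iterating \Cref{prop:cancellable-1} over the disjoint parallel classes $E_{u_i,w_2}$ — precisely the aggregation the paper performs one step at a time. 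Both arguments are sound; yours buys an explicit structural description of the bundles (closer in spirit to what the paper later needs in \Cref{sec:chromatic}) at the cost of heavier bookkeeping, while the paper's induction is shorter. One small imprecision to fix in your write-up: in the case of a resolved $u'$ versus $w = f_{u'}$ under the branch of Line~\ref{line:b-condn1-S}, the $u'$-relevant part of $A_w$ is $\bar{S}_{u',f_{u'}} \cup \lo_{u'}$, not just the single cut-piece of $E_{u',f_{u'}}$; the no-envy conclusion there does not follow from the favourite/cut preferences you list but directly from the if-condition $v_{u'}(S_{u',f_{u'}}) > v_{u'}(\bar{S}_{u',f_{u'}} \cup \lo_{u'})$, so your ``short case check'' should invoke that condition explicitly.
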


\begin{proof}
    The proof is by induction on the iterations of the outer for loop. The claim is clearly true for the initial empty allocation. Let $\hat{A}$ be the allocation before $u$ is resolved. As noted, any vertex $u' \in L$, $u' \neq u$ cannot be in $\struct{u}$. Then by the induction hypothesis, in $\hat{A}$, no agents in $\struct{u}$ are envied.
    
    In the current iteration, when $\struct{u}$ is resolved, only agents in $\struct{u}$ get goods, while agents not in $\struct{u}$ retain their allocation. Hence, any new envy edges must be to agents in $\struct{u}$. Further since only agents in $\struct{u}$ value the goods in $\struct{u}$, any new envy edges must also be from agents in $\struct{u}$. 

    Consider any pre-existing envy from an agent $w$ to $w' \not \in \struct{u}$ (as established, agents in $\struct{u}$ are not envied in $\hat{A}$). Thus $v_w(\hat{A}_w) < v_w(\hat{A}_{w'})$. In the current iteration, the allocation to $w'$ does not change, while $w$ does not lose any goods. Hence any such envy remains EFX. Thus to prove the lemma, we only need to show that any new envy between agents in $\struct{u}$ is from $f_u$ to $u$, and is EFX.

    Consider first an ordinary neighbour $w$. Of the goods allocated in this iteration, $w$ is only interested in $E_{u,w}$. Of this set, it gets its preferred partition $T_{u,w}$. Then since $w$ did not envy anyone in $\struct{u}$ earlier, for any agent $z \in \struct{u}$, $v_w(\hat{A}_w) \ge v_w(\hat{A}_z)$. Then
    
    \[
    v_w(A_w) = v_w(\hat{A}_w \cup T_{u,w}) \ge v_w(\hat{A}_z \cup \bar{T}_{u,w}) \ge v_w(A_z) \, .
    \]

    \noindent The first inequality is because valuations are cancellable, and $w$ prefers $\hat{A}_w$ to $\hat{A}_z$, and $T_{u,w}$ to $\bar{T}_{u,w}$. The second inequality is because in this iteration, any agent other than $w$ either does not receive any good valued by $w$, or receives the bundle $\bar{T}_{u,w}$.

    Consider $f_u$, agent $u$'s favourite neighbour. Again of the goods allocated in this iteration, $f_u$ is only interested in $E_{u,f_u}$. Of this set, it either gets its preferred bundle $T_{u,f_u}$ (in either Line~\ref{line:b-condn2-leftovers} or in Line~\ref{line:b-condn3-allok}), or it gets the bundle $\bar{S}_{u,f_u}$ in Line~\ref{line:b-condn1-S}. In the former case, since the valuations are cancellable and agent $f_u$ did not earlier envy any agent in $\struct{u}$, agent $f_u$ does not envy any agent in $\struct{u}$ in allocation $A$ either. In the latter case, $f_u$ gets the bundle $\bar{S}_{u,f_u}$, and hence may envy $u$ since agent $u$ gets $S_{u,f_u} = T_{u,f_u}$. But note that then $A_{u} = S_{u,f_u}$, and this is an EFX partition since $f_u$ cuts the set of items $E_{u,f_u}$. Thus, the only possible new envy edge from $f_u$ is to $u$, and this is EFX.

    Lastly, consider agent $u$. We show that $u$ will not envy any agent in $\struct{u}$. If $u$ gets $S_{u, f_u} \cup \lo_u$ (Line~\ref{line:b-condn3-allok}), then by definition $S_{u,f_u}$ is preferred over all the other bundles allocated to the other agents in $\struct{u}$, and $u$ additionally gets all the left-over bundles. Otherwise, $u$ gets a choice between $S_{u,f_u}$ and $\bar{S}_{u,f_u} \cup \lo_u$; the other bundle is given to $f_u$. Then clearly $u$ does not envy $f_u$. Both bundles offered have higher value for $u$ than $T_{u,w}$ for any neighbour $w$, and hence $u$ does not envy any ordinary neighbour either.
\end{proof}

\section{EFX Allocations for Monotone Valuations in Tree Multi-graphs}
\label{sec:tree}

We now show that for monotone valuations, an EFX allocation exists in tree multi-graphs. Note that even for two agents with monotone submodular valuations, computing an EFX allocation is PLS-complete~\cite{GoldbergHH23}, and hence a polynomial-time algorithm for tree multi-graphs may not exist. Since trees are also bipartite, a polynomial time algorithm for cancellable valuations follows from the previous section.

Our algorithm for trees is recursive, and again utilizes the procedure $\cac{v}{S}$, that obtains an EFX partition $(S^1, S^2)$ for two identical agents with valuation $v$. Let $T=(V, E)$ be the tree multi-graph, and $\ell$ be a leaf with parent $p$. Then $E_{\ell,p}$ are the edges between $\ell$ and $p$, and no other agent values these goods. Inductively, let $A'$ be an EFX allocation of goods $E \setminus E_{\ell,p}$ to agents $V \setminus \{\ell\}$. Our objective is then to extend $A'$ to an EFX allocation of goods $E$.

For this, let $(S^1, S^2)$ be the partition obtained when $p$ cuts the bundle $E_{\ell,p}$. Agent $\ell$ gets its preferred bundle (say $S^1$). If $p$ is not envied in allocation $A'$, then it gets $S^2$, and the algorithm terminates. If $p$ is already envied in allocation $A'$, it cannot get additional goods. Let $s$ then be a source in the envy graph with a path to $p$. We add $S^2$ to agent $s$'s allocation. If $p$ does not envy $s$ now, the algorithm terminates. Else, there is an envy cycle where $p$ envies $s$. Resolving the envy cycle gives us an EFX allocation.

\begin{algorithm}[H]
\caption{Tree-EFX}
\label{alg:TreeEFX}
\begin{algorithmic}[1] 
\REQUIRE A tree multi-graph $T = (V, E)$ with $|V| \ge 2$ and monotone valuations.
\ENSURE An \efx{} allocation $A$
\IF{($|V|=2$)}
    \STATE Let $V = \{\ell,p\}$
    \STATE $(S^1,S^2) = \cac{p}{E_{\ell,p}}$
    \STATE $A_\ell = \arg \max_{S \in \{S^1, S^2\}} v_\ell(S)$, $A_p = \bar{A}_\ell$ 
    \RETURN $A = (A_\ell, A_p)$
\ENDIF
\STATE Let $\ell$ be a leaf in $T$
\STATE $A' = \textrm{Tree-EFX}(V \setminus \{\ell\},E \setminus E_{\ell,p})$ \COMMENT{Recursive call on the smaller tree without leaf $\ell$}
\STATE $G_{A'}$ is the envy graph for allocation $A'$
\WHILE[{Resolve any envy-cycles}]{(there is an envy-cycle $C$ in $G_{A'}$)}
    \STATE $A' = {A'}^C$
\ENDWHILE
\STATE $A_u  = A'_u$ for all vertices $u \neq \ell$ \label{line:t-recursive}
\STATE $(S^1, S^2) = \cac{p}{E_{\ell,p}}$  \COMMENT{$p$ cuts the bundle $E_{\ell,p}$}
\STATE $A_\ell = \arg \max_{S \in \{S^1, S^2\}} v_\ell(S)$ \COMMENT{Leaf $\ell$ gets its preferred bundle} \label{line:t-ell}
\IF{($p$ is a source in $G_{A'}$)} \label{line:t-easy}
    \STATE $A_p = A_p \cup \bar{A}_\ell$ \COMMENT{If $p$ is not envied, $p$ gets the left-over items} 
\ELSE
    \STATE Let $s$ be a source with a path $P$ to $p$. \COMMENT{Else, assign left-over items to $s$, resolve any envy cycle.}
    \STATE $A_s = A_s \cup \bar{A}_\ell$
    \IF{($v_p(A_p) < v_p(A_s)$)}
        \STATE Let $C$ be the envy-cycle consisting of the edge $(p,s)$ and the $s$-$p$ path $P$.
        \STATE $A = A^C$
    \ENDIF
\ENDIF
\RETURN $A$
\end{algorithmic}
\end{algorithm} 

\begin{theorem}
    Given a tree multi-graph $T = (V,E)$ with monotone valuations, Algorithm \textrm{Tree-EFX} returns an EFX allocation $A$.
    \label{thm:tree-efx}
\end{theorem}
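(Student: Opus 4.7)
The plan is induction on $|V|$. The base case $|V|=2$ is immediate from \Cref{prop:cancellable-2agent-efx}: cut-and-choose produces a partition EFX for two copies of $v_p$, and letting $\ell$ pick its preferred bundle preserves EFX for the pair. For the inductive step, pick a leaf $\ell$ with neighbour $p$ and invoke the hypothesis on $T\setminus\{\ell\}$. A standard envy-cycle-elimination argument shows that after the while loop $A'$ is still EFX and $G_{A'}$ is a DAG, because resolving a cycle leaves the multiset of bundles unchanged while weakly increasing each agent's value for its own bundle.

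Next, cut $E_{\ell,p}$ using $p$ into $(S^1,S^2)$ and let $\ell$ take its preferred bundle $S$. Since $\ell$ only values items in $E_{\ell,p}$ and the remaining items $\bar{S}\subseteq E_{\ell,p}$ will be assigned en bloc to a single other agent (either $p$ or the source $s$ introduced below), the inequality $v_\ell(S)\ge v_\ell(\bar{S})$ makes $\ell$ envy-free in the output. The remaining task is to verify EFX among the agents in $V\setminus\{\ell\}$ after $\bar{S}$ is placed; in both branches of the algorithm the argument rests on two recurring ingredients: the EFX property of the cut $(S^1,S^2)$ for $v_p$, and the fact that whoever receives $\bar{S}$ was a source in $G_{A'}$.

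When $p$ is itself a source the analysis is straightforward: for every $v\notin\{\ell,p\}$, $v$ does not value $\bar{S}$, so $v_v(A_p)=v_v(A'_p)\le v_v(A'_v)=v_v(A_v)$ by the source property, and EFX for $v$ is inherited from $A'$; agent $p$ only gains items so its envy weakens, and its EFX toward $\ell$ follows from $v_p(A_p)\ge v_p(\bar{S})\ge v_p(S\setminus\{x\})$ using EFX of the cut. The main obstacle is the other branch, where $p$ is envied, the algorithm picks a source $s$ with a directed path $P$ to $p$ and adds $\bar{S}$ to $A_s$, and, if $p$ then envies $s$, resolves the envy cycle $P\cup\{(p,s)\}$. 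After resolution $p$ holds $A'_s\cup\bar{S}$, each other cycle agent $u$ holds a bundle $A'_w$ it strictly preferred to $A'_u$ in $A'$, and every non-cycle agent is unchanged. The key observation that makes EFX go through is that $v_v(A'_s\cup\bar{S})=v_v(A'_s)$ for every $v\notin\{\ell,p\}$, so the source property of $s$ in $A'$ transfers verbatim to the new allocation for all such $v$; combined with EFX of $A'$, monotonicity, and the cutting property of $(S^1,S^2)$ for $p$'s envy toward $\ell$, this verifies every EFX constraint. The subcase in which $p$ does not envy $s$ after $\bar{S}$ is added is dispatched by the same observations without cycle resolution.
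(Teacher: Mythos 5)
Your proposal is correct and follows essentially the same route as the paper's proof: induction on the tree, envy-cycle resolution to make $G_{A'}$ acyclic, cut-and-choose with $p$ cutting $E_{\ell,p}$, and a case split on whether $p$ is a source, with the leftover bundle $\bar{S}$ going to a source $s$ and the new cycle through $(p,s)$ resolved if needed. The key facts you isolate (only $p$ values $\bar{S}$ besides $\ell$, whoever ends up holding $\bar{S}$-augmented bundles is not newly envied because $s$ was a source, and $p$'s EFX toward $\ell$ follows from $v_p(A_p)\ge v_p(\bar{S})$ together with the EFX property of the cut) are exactly the ones the paper uses.
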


\begin{proof}
    By induction, assume that $A'$ is an EFX allocation on $T' = (V \setminus \{\ell\},E \setminus E_{\ell,p})$. Note that given an EFX allocation, resolving envy cycles does not destroy this property, since each agent's value does not decrease, and bundles are merely reassigned. Hence, the partial allocation in Line~\ref{line:t-recursive} is EFX for agents $V \setminus \{\ell\}$, and $A_{\ell} = \emptyset$.

    On allocating $A_\ell$ to agent $\ell$, clearly $\ell$ does not envy any agent. The only agent that could envy $\ell$ is agent $p$, since only agents $p$ and $\ell$ value the bundle $A_\ell$. Further, since $p$ gets to cut the bundle $E_{\ell,p}$, as long as $p$ possesses a bundle with value at least $v_p(\bar{A}_\ell)$, its envy towards $\ell$ will be EFX.

    If $p$ is a source in the envy graph, then the condition in Line~\ref{line:t-easy} is satisfied and gets the bundle $\bar{A}_\ell$. No agent will envy $p$, since agents $V \setminus \{p,\ell\}$ do not value the bundle $\bar{A}_\ell$, and $\ell$ already gets its preferred bundle. Thus the allocation $A$ is EFX.

    If $p$ is not a source, then since the envy-graph is acyclic (and $\ell$ does not envy any agent), there is a source $s$ with a path to $p$ in the envy graph. Let $P$ be this path. We assign $\bar{A}_\ell$ to agent $s$. The only agent that could now envy $s$ is $p$. If $p$ envies $s$, there is an envy cycle, which we resolve. We claim that the resulting allocation is EFX. This is because agent $\ell$ continues to not envy any agent. Agent $p$'s value for its bundle is now at least $v_p(\bar{A}_\ell)$, and hence any envy towards agent $\ell$ is EFX. Further agent $p$'s value for its bundle is also at least its earlier value; since resolving the cycle only shifted bundles around, any envy towards agents $V \setminus \{\ell\}$ is also EFX by induction. Finally, for other agents, their value does not decrease, and any envy is unaffected by the shifting of bundles. Hence the allocation is EFX.
\end{proof}

\section{EFX Allocations in $t$-Chromatic multi-graphs with Girth at Least $(2t-1)$}
\label{sec:chromatic}

We now extend our algorithm and analysis to show that EFX allocations exist for agents with cancellable valuations in multi-graphs with girth at least $2t-1$, where $t$ is the chromatic number of the graph. A multi-graph is $t$-chromatic if $t$ is the minimum number of colours required to colour the vertices, so no edge has both end-points the same colour. The girth of a multi-graph is the length of the shortest simple cycle, without using parallel edges. For example, bipartite multi-graphs are $2$-chromatic and have girth at least $4$, satisfying the assumption.\footnote{Clearly, any $2$-chromatic multi-graph has girth at least $4$.} Another example is the Petersen graph (with parallel edges), which is $3$-colourable and has girth $5$. We will assume that we are given a $t$-colouring of the multi-graph.

Any cycle is 3-colourable, hence our algorithm gives an EFX allocation for cycles of length at least 5. A cycle of length 4 is a bipartite graph, hence~\Cref{alg:BipartiteEFX} gives an EFX allocation for this case. For 3 agents, EFX allocations are known to exist. Hence, our work shows that in cycles with parallel edges and cancellable valuations, an EFX allocation can be computed in polynomial time.

Our algorithm proceeds as follows. Given a multi-graph $G=(V,E)$ coloured with $t$ colours, let $c(u) \in [t]$ be the colour of each vertex. Let $C^i = \{u: c(u) = i\}$ be the independent set consisting of all vertices coloured $i$. We think of the vertices as ordered from left to right by their colour, and say a vertex $u$ is \emph{to the left of} $w$ if $c(u) < c(w)$. For $i = 1, \ldots, t-1$, define $L^i := C^i$, and $R^i := C^{i+1} \cup \ldots \cup C^t$.

We proceed in phases. In phase $i$, we consider the bipartite multi-graph $G^i = (V^i = L^i \cup R^i, E^i)$ where an edge $e \in E$ is in $E^i$ if $|r(e) \cap L^i|$ $= |r(e) \cap R^i| = 1$, i.e., each edge is from a vertex in $L^i$ to a vertex in $R^i$. Note that each vertex $u \in V$ appears in a left bipartition in exactly one phase, when we consider all edges to vertices to the right of $u$. Thus each edge appears in exactly one phase.

Now in phase $i$, we run~\Cref{alg:BipartiteEFX} on the bipartite multi-graph $G^i$. The only change we make is that in~\Cref{alg:BipartiteEFX}, initially all vertices --- in particular, vertices $u \in L$ --- had empty allocations. This may not be the case in the current algorithm in the second phase onwards. Hence, we remove Line~\ref{line:b-init}, which initializes the allocation. Further in Lines~\ref{line:c-condn1-S} and~\ref{line:c-condn3-allok}, agent $u \in L^i$ retains its earlier allocation. In Line~\ref{line:c-condn2-leftovers}, when $u \in L^i$ is possibly envied by $f_u$, its earlier allocation passes to $f_u$.~\Cref{alg:chromaticEFX} formally states the algorithm.

\begin{algorithm}[ht]
\caption{$t$-Chromatic-EFX}
\label{alg:chromaticEFX}
\begin{algorithmic}[1] 
\REQUIRE $t$-Chromatic multi-graph $G = (L \cup R, E)$ with girth $(2t-1)$ and cancellable valuations. $C^i$ is the set of vertices coloured $i \in [t]$.
\ENSURE \efx{} allocation $A$.
\STATE Initially, for each vertex $u$, $A_u=\emptyset$. Resolved agents $\res = \emptyset$. \label{line:c-init}
\FOR{$i =1$ to $t-1$}
\STATE $L^i = C^i$, $R^i = C^{i+1} \sqcup \ldots \sqcup C^t$. \COMMENT{Only consider edges between $L^i$, $R^i$}
\FOR{each vertex $u \in L^i$} \label{line:c-outerforloop1}
     \FOR{each vertex $w \in N_u \cap R^i$} \label{line:c-innerloop1}
          \STATE $(E_{u,w}^1, E_{u,w}^2)= \cac{w}{E_{u,w}}$. \COMMENT{Returns an EFX partition for identical agents}\label{line:c-cutloop}
          \STATE $S_{u,w}=\argmax_{S \in \{E^{1}_{u,w},E^{2}_{u,w}\}} v_u(S)$
          \STATE $T_{u,w}=\argmax_{S \in \{E^{1}_{u,w},E^{2}_{u,w}\}} v_w(S)$. \COMMENT{Preferred bundles for $u$ and $w$ from $E_{u,w}$}
     \ENDFOR  
     \STATE $f_u=\argmax_{w \in N_u \cap R^i} v_u(S_{u,w})$ \label{line:c-favourite-neighbour}
     \STATE $\lo_u = \bigcup_{w \in N_u \setminus \{f_u\} \cap R^i} \bar{T}_{u,w}$ \COMMENT{Left-over bundles from all ordinary neigbours}
     \STATE $A_w = A_w \cup T_{u,w}$ for all $w \in N_u \setminus \{f_u\} \cap R^i$ \COMMENT{Ordinary neighbours get their preferred bundle} \label{line:c-ordinary-neighbours}
     \IF[Both $u$ and $f_u$ prefer the same bundle in $E_{u,f_u}$]{($S_{u,f_u} = T_{u,f_u}$)}
            \IF[$u$ prefers $S_{u,f_u}$ over left-over and its own goods]{($v_u(S_{u,f_u}) > v_u(A_u \cup \bar{S}_{u,f_u} \cup \lo_u)$)} \label{line:c-condn1-S}
                \STATE $A_{f_u} = A_{f_u} \cup A_u \cup \bar{S}_{u,f_u} \cup \lo_u$
                \STATE $A_u = S_{u,f_u}$
            \ELSE[$u$ prefers the left-over bundles] \label{line:c-condn2-leftovers}
                \STATE $A_u = A_u \cup \bar{S}_{u,f_u} \cup \lo_u$
                \STATE $A_{f_u} = A_{f_u} \cup  S_{u,f_u}$
            \ENDIF
    \ELSE[$u$ and $f_u$ prefer different bundles] \label{line:c-condn3-allok}
        \STATE $A_u = A_u \cup S_{u,f_u} \cup \lo_u$
        \STATE $A_{f_u} = A_{f_u} \cup T_{u,f_u}$
    \ENDIF
    \STATE $\res = \res \cup \{u\}$ \COMMENT{Bookkeeping: $u$ is resolved.}
    \ENDFOR
\ENDFOR
\RETURN $A$
\end{algorithmic}
\end{algorithm} 

A \emph{phase} is an iteration of the outer loop. In phase $i$, the vertices in $C^i$ are picked sequentially in the inner for loop.

\paragraph*{Structures.} For our analysis, we slightly need to redefine \emph{structures}. For a vertex $u \in C^i$, we now define the structure $\struct{u}$ as the subgraph induced by $u \cup (N_u \cap R^i)$. That is, a structure now consists of $u$ and all neighbours \emph{to its right}. As before, $u$ is the \emph{root} of the structure. Now each iteration in phase $i$ of our algorithm resolves the structure $\struct{u}$ for some $u \in C^i$, and thus assigns all the edges in $\struct{u}$. A vertex $u$ is resolved if $\struct{u}$ is resolved.

Fix agent $u \in C^i$. As before, for each $w \in (N_u \cap R^i)$, $(E_{u,w}^1, E_{u,w}^2) = \cac{w}{E_{u,w}}$ is the partition of $E_{u,w}$ returned when agent $w$ cuts. Define $S_{u,w} := \arg \max \{v_u(E_{u,w}^1), v_u(E_{u,w}^2)\}$ as the bundle preferred by $u \in L$, and $T_{u,w} := \arg \max \{v_w(E_{u,w}^1), v_w(E_{u,w}^2)\}$ as the bundle preferred by $w$. If either agent is indifferent between the two bundles, we break ties so that $S_{u,w} \neq T_{u,w}$. Further, for $S \subseteq E_{u,w}$, we define $\bar{S} = E_{u,w} \setminus S$. Then $u$'s \emph{favourite} neighbour $f_u$ is $f_u := \arg \max_{w \in N_u \cap R_i} \max \{v_u(S_{u,w})\}$.

\noindent This is the right neighbour that offers $u$ the highest-value bundle after cutting the adjacent edges. The other right neighbours are simply called \emph{ordinary} neighbours. 

We note the following properties of our algorithm. In phase $i$ of the algorithm, the inner for loop picks an agent $u \in C^i$, and resolves $\struct{u}$. The goods in $\struct{u}$ are assigned to the agents in $\struct{u}$. The allocation to other agents is unchanged. Hence in particular, the allocation to a resolved agent does not change subsequently. While $\struct{u}$ is resolved, except for $u$, agents in $\struct{u}$ do not lose any goods. Agent $u$ may however give its entire set of goods to its favourite neighbour $f_u$ (Line~\ref{line:c-condn1-S}). However, in this case, agent $u$ gets $S_{u,f_u}$, which it prefers over its earlier allocation and all the left-over goods. Thus, the value of each agent for its bundle is nondecreasing.


Much of the effort in our analysis goes to proving~\Cref{lem:c-newenvy}, that any envy is within a structure, and is from a favourite neighbour $f_u$ to the root $u$. In particular, the issue we want to avoid is the following. Say for some interim allocation $A$, agent $u'$ values goods in both $A_u$ and $A_{f_u}$. Then if $A_u$ is transferred to $f_u$ in Line~\ref{line:c-condn1-S}, agent $u'$ should not start envying $f_u$.

We show this by contradiction: that if agent $u'$ does indeed start envying $f_u$, then it must have short paths to both $u$ and $f_u$, and hence, taken with any edge between $u$ and $f_u$, there is a cycle of length less than $2t-1$. We do this in two steps:~\Cref{clm:c-dist-splgoods} shows that if $u'$ envies the bundle of goods $A_u \cup A_{f_u}$, then these must contain a good from $\struct{u'}$. Claims~\ref{clm:c-dist-easy} and~\ref{clm:c-dist-splgoods} then establish bounds on the distance from $u'$ for $u$ and $f_u$.

We first show that an unresolved agent does not envy the \emph{entire} set of goods held by all other unresolved vertices.

\begin{claim}
Let $A$ and $\res$ be the allocation and the set of resolved vertices at Line~\ref{line:c-innerloop1}. Let $z$ be an unresolved agent. Then 

\[v_z(A_z) \ge v_z\left(\bigcup_{w \not \in \res, \, w \neq z} A_w\right) \, .\]
\label{clm:c-envy-union1}
\end{claim}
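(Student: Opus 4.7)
The plan is to show the stronger statement that no unresolved agent $w\neq z$ holds any good that $z$ values; then $v_z$ evaluated on the right-hand union reduces to $v_z(\emptyset)$, and the claim follows by monotonicity. First I would reduce the problem to neighbours of $z$: since $v_z$ has zero marginal value for any good not adjacent to $z$, an agent $w\neq z$ can hold a $z$-valued good only when $w\in N_z$ and the good lies in $E_{z,w}$. So it suffices to prove that every unresolved $w\in N_z$ currently holds none of the goods in $E_{z,w}$.

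I would then split on the colour ordering between $z$ and $w$. Since $w\in N_z$ and each $C^j$ is independent, $c(w)\neq c(z)$. If $c(w)<c(z)$, then by the definition of $\struct{\cdot}$ used in this section we have $E_{z,w}\subseteq \struct{w}$; since $w$ is unresolved, the algorithm has not yet processed $\struct{w}$, so every edge of $E_{z,w}$ is still unallocated and in particular none lies in $A_w$. Symmetrically, if $c(w)>c(z)$, then $E_{z,w}\subseteq \struct{z}$; since $z$ is unresolved, $\struct{z}$ has not been processed, so again $A_w\cap E_{z,w}=\emptyset$. (The fact that we are reading the state at Line~\ref{line:c-innerloop1}, i.e., before any allocation in the current outer-loop iteration, is what makes the second case work when $z$ happens to be the vertex currently being picked.)

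Taking the union over all unresolved $w\neq z$, the set $\bigcup_{w\notin\res,\,w\neq z}A_w$ contains no good adjacent to $z$. Hence $v_z$ of this union equals $v_z(\emptyset)\leq v_z(A_z)$ by monotonicity, which is the desired inequality. The argument is essentially bookkeeping about which structures have been processed; the only delicate point is the asymmetric definition of $\struct{\cdot}$ used in this section, which places every edge into the structure of its lower-coloured endpoint and so forces us to handle the two orderings of $c(w)$ and $c(z)$ separately. No properties of the cut-and-choose allocation rule itself are needed.
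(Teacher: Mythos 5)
Your proposed reduction fails at its very first step, and this is a genuine gap rather than a presentational issue. You claim that an agent $w \neq z$ can hold a good valued by $z$ only if $w \in N_z$ and the good lies in $E_{z,w}$. That is false for this algorithm: when a structure $\struct{u}$ with $z \in N_u \cap R^{c(u)}$ is resolved, agent $z$ receives its preferred half $T_{u,z}$ of the cut of $E_{u,z}$, but the leftover half $\bar{T}_{u,z}$ is placed into $\lo_u$ and is given either to the root $u$ or to $u$'s favourite neighbour $f_u$ (Line~\ref{line:c-condn1-S}), which is a third vertex, generally not adjacent to $z$, and which is still \emph{unresolved} because it lies in a higher colour class. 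Moreover, by \Cref{prop:c-moveone} such goods can be passed further rightward to yet other unresolved vertices in later phases. So the union $\bigcup_{w \notin \res,\, w \neq z} A_w$ can very well contain goods of $E_{u,z}$ that $z$ values (already in the bipartite case $t=2$, where the vertices of $C^2$ are never resolved), and your conclusion that this union has value $v_z(\emptyset)$ for $z$ is simply not available. Your case analysis on whether $E_{z,w} \subseteq \struct{w}$ or $\struct{z}$ is correct but beside the point, since the holder of a $z$-valued good need not be an endpoint of that good.

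Consequently the claim cannot be proved by monotonicity alone, contrary to your closing remark that no properties of the cut-and-choose rule are needed; the statement is really about the quality of the bundles $z$ has received, not about $z$-valued goods being absent from other unresolved hands. The paper's proof proceeds by induction on the iterations of the inner loop: in the iteration resolving $\struct{u}$, if $z \notin \struct{u}$ nothing relevant changes; if $z$ is an ordinary right neighbour, it gains its preferred half $T_{u,z}$ while only $\bar{T}_{u,z}$ can join the other unresolved agents' bundles, so cancellability preserves the inequality; and if $z = f_u$, then although $z$ may lose $T_{u,z}$ to $u$, agent $u$ becomes resolved and drops out of the union on the right-hand side, while $z$ keeps everything else and gains $\bar{T}_{u,z}$. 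Any correct proof has to use this kind of argument (the EFX cut plus cancellability, or an equivalent accounting), so the gap in your proposal is not patchable within your stated strategy.
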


\begin{proof}
    Fix an unresolved vertex $z$. We prove this by induction on the iterations of the inner loop. Initially, the allocation is empty, and the claim is trivially satisfied. Assume this is true prior to the loop when a vertex $u$ is resolved. Let $\hat{A}$, $\widehat{\res}$ be the allocation and set of resolved vertices before $u$ is resolved, and $A$, $\res = \widehat{\res} \cup \{u\}$ be the allocation and set of resolved vertices after. Note that the allocation to resolved agents does not change subsequently. Hence if $z$ now envies the set of goods held by unresolved agents, new goods that $z$ values must be allocated to unresolved agents.

    In this loop, only the allocation to agents in $\struct{u}$ is modified. Further, none of the agents in $\struct{u}$ are resolved prior to the loop, i.e., $\struct{u} \cap \widehat{\res} = \emptyset$. Hence $v_z(\hat{A}_z) \ge v_z\left(\cup_{u' \in \struct{u}, \, u' \neq z} \hat{A}_{u'}\right)$ by the induction hypothesis. Now if $z \not \in \struct{u}$, then no additional goods it values are allocated in this loop. Hence the claim holds by induction. Thus, since $z\neq u$ (since $z$ is unresolved after the loop), it must be that $z$ is a neighbour to the right of $u$. Of the goods allocated in this loop, $z$ only values $E_{u,z}$.

    If $z$ is an ordinary neighbour, then $z$ additionally gets its preferred bundle $T_{u,z}$, while the bundle $\bar{T}_{u,z}$ is allocated to other agents. Hence, since valuations are cancellable, the claim remains true. On the other hand, if $z = f_u$, then $z$ could lose its preferred bundle $T_{u,z}$. But in this case agent $u$ is resolved, hence $u \not \in \res$. Since $z$ additionally gets $\bar{T}_{u,z}$ and does not lose any goods, the claim remains true after the loop also.
\end{proof}

The next claim builds on the previous claim to show that if any vertex $z$ envies the union of goods held by some set $S$ of unresolved agents, then $z$ must be resolved, and some agent in $S$ must hold a good from $\struct{z}$.

\begin{claim}
Let $A$ and $\res$ be the allocation and the set of resolved vertices at Line~\ref{line:c-innerloop1}. Let $z$ be any agent, and $S$ is a subset of unresolved vertices so that

\[
v_z(A_z) < v_z \left(\bigcup_{w \in S, z \not \in S} A_w \right) \, .
\]

Then $z$ is resolved, and for some agent $w \in S$, $A_w$ contains some good $g  \in \struct{z}$.
\label{clm:c-envy-union2}
\end{claim}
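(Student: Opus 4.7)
The plan is to argue both conclusions by contradiction, using Claim~\ref{clm:c-envy-union1} for the first and a per-left-neighbour case analysis for the second. The first conclusion is a short corollary of Claim~\ref{clm:c-envy-union1}; the bulk of the work is tracking, for each left-neighbour $u$ of $z$, how the two halves of $E_{z,u}$ are distributed once $\struct{u}$ has been resolved.

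For the first part, suppose towards contradiction that $z\notin\res$. Claim~\ref{clm:c-envy-union1} gives $v_z(A_z)\ge v_z\bigl(\bigcup_{w\notin\res,\,w\ne z}A_w\bigr)$. Since $S$ consists of unresolved agents with $z\notin S$, the union $\bigcup_{w\in S}A_w$ is contained in that larger union, so monotonicity of $v_z$ (a consequence of cancellability) yields $v_z(A_z)\ge v_z\bigl(\bigcup_{w\in S}A_w\bigr)$, contradicting the envy hypothesis. Hence $z\in\res$.

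For the second part, suppose towards contradiction that no $w\in S$ holds any good of $\struct{z}$. Because $z$ has zero marginal value for non-incident goods, and the goods incident to $z$ partition into the goods of $\struct{z}$ and, for each left-neighbour $u$ of $z$, the parallel bundle $E_{z,u}\subseteq \struct{u}$ (allocated in the earlier phase when $\struct{u}$ was resolved), every $z$-valued good in $\bigcup_{w\in S}A_w$ lies in some $E_{z,u}$ with $u$ a left-neighbour of $z$. Writing $B_{z,u}:=A_z\cap E_{z,u}$ and $B'_{z,u}:=\bigl(\bigcup_{w\in S}A_w\bigr)\cap E_{z,u}$, I would establish $v_z(B_{z,u})\ge v_z(B'_{z,u})$ for every such $u$ by reading off~\Cref{alg:chromaticEFX}. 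If $z$ was an ordinary right-neighbour of $u$, then $B_{z,u}=T_{u,z}$ and $\bar T_{u,z}$ was folded into $\lo_u$ and handed to either $u$ or $f_u$, so $B'_{z,u}\subseteq \bar T_{u,z}$; the inequality then follows from $T_{u,z}$ being $z$'s preferred half and monotonicity. If $z=f_u$, an inspection of Lines~\ref{line:c-condn1-S},~\ref{line:c-condn2-leftovers} and~\ref{line:c-condn3-allok} shows that the complement of $B_{z,u}$ inside $E_{z,u}$ is always retained by the root $u$; since $u$ is resolved and hence not in $S$, $B'_{z,u}=\emptyset$. These per-$u$ inequalities combine via Proposition~\ref{prop:cancellable-1} applied inductively to the disjoint family $\{(B_{z,u},B'_{z,u})\}_u$, and together with monotonicity ($A_z\supseteq \bigcup_u B_{z,u}$) and the zero-marginal property ($v_z(\bigcup_{w\in S}A_w)=v_z(\bigcup_u B'_{z,u})$ under the hypothesis) this gives $v_z(A_z)\ge v_z(\bigcup_{w\in S}A_w)$, a contradiction.

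The main point of care is the $z=f_u$ branch at Line~\ref{line:c-condn1-S}, where $z$'s earlier allocation is handed up to the root $u$: one must verify that of the edges in $E_{z,u}$, it is precisely $S_{u,z}=T_{u,z}$ that moves to $u$ while $\bar S_{u,z}$ stays with $z$ as part of $A_{f_u}\cup A_u\cup \bar S_{u,f_u}\cup \lo_u$, so that no part of $E_{z,u}$ escapes to any still-unresolved agent. Once this bookkeeping is pinned down across all three sub-cases when $z=f_u$, the remainder of the second-part argument is a routine combination of Proposition~\ref{prop:cancellable-1} and zero marginal value on non-incident goods.
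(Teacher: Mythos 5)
Your first part (deriving that $z$ must be resolved from \Cref{clm:c-envy-union1} plus monotonicity) is exactly the paper's argument. The second part, however, has a genuine gap: the per-left-neighbour inequalities $v_z(B_{z,u})\ge v_z(B'_{z,u})$ are not established, and in fact they can fail, because your bookkeeping of $E_{z,u}$ is a snapshot of where its two halves went when $\struct{u}$ was resolved, and it ignores that $z$ itself has since been resolved and may have executed Line~\ref{line:c-condn1-S}. In that event $z$'s \emph{entire} prior bundle --- in particular the piece of $E_{z,u}$ it received earlier, e.g.\ $T_{u,z}$ in the ordinary case --- is handed to $f_z$, and by \Cref{prop:c-moveone} it can keep migrating rightward to agents that are still unresolved and hence eligible for $S$. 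Then $B_{z,u}=A_z\cap E_{z,u}=\emptyset$ while $B'_{z,u}$ may contain $T_{u,z}$; and if $\bar T_{u,z}$ was folded into $\lo_u$ and passed to $f_u$ (and onward), \emph{both} halves of $E_{z,u}$ can simultaneously sit with unresolved agents, so no comparison against $A_z$ restricted to $E_{z,u}$ can work. The same objection defeats your $z=f_u$ branch: the half retained by the root $u$ indeed stays put, but the half given to $z$ need not, so ``$B'_{z,u}=\emptyset$'' is unjustified. Note that your hypothesis only forbids agents of $S$ from holding goods of $\struct{z}$; it places no restriction on goods of $E_{z,u}\subseteq\struct{u}$, which is precisely what your case analysis would need to control, and nothing in your argument uses the hypothesis beyond identifying which edges $z$ values --- a sign that the mechanism by which the claim holds (the transferred bundle normally carries $\struct{z}$ goods with it, keeping its holder out of admissible sets $S$) is not being invoked at all.

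This transfer case is the crux, not routine bookkeeping, and it is why the paper does not argue pointwise at the current allocation. The paper proves the contrapositive by induction over iterations of the inner loop: at the iteration resolving $z$ it uses the hypothesis to exclude the members of $\struct{z}$ from $S$; at every later iteration resolving some $u$, it notes that the only relevant change is the possible transfer of the old bundle $\hat A_u$ to $f_u$, and when $f_u\in S$ it pulls the comparison back to the pre-transfer allocation and applies the induction hypothesis to the enlarged set $S\cup\{u\}$, using that the already-resolved $z$ values no good of $\struct{u}$. Any correct proof has to follow the bundles through time in this fashion; a one-shot per-neighbour decomposition of the current allocation cannot be repaired into the claimed inequalities, so the proposal as written does not prove the statement.
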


\begin{proof}
By Claim~\ref{clm:c-envy-union1}, if $z$ is unresolved, then $v_z(A_z)$ $\ge v_z \left( \cup_{w \in S, \, w \neq z} A_w \right)$. Hence $z$ must in fact be resolved. We prove the contrapositive by induction, that if a subset of unresolved agents $S$ do not hold goods from $\struct{z}$, then $v_z(A_z) \ge v_z(\cup_{w \in S} A_w)$. Clearly, the claim holds in any iteration before $z$ is resolved.

Consider the iteration where $z$ is resolved. Every agent in $\struct{z}$ receives goods from $\struct{z}$, hence, we only consider subsets $S$ consisting of agents not in $\struct{z}$. But the allocation for these agents does not change, and $z$'s value for its bundle is nondecreasing, hence the claim holds in this iteration.

Consider any later iteration, where an agent $u$ is resolved. Then $u \not \in S$, since $u$ is resolved. Let $\hat{A}$ be the allocation before resolving $\struct{u}$, and $A$ the allocation after. Again, only the allocation for agents in $\struct{u}$ changes, hence by induction we only need to consider subsets $S$ that contain agents in $\struct{u}$. Note that since agent $z$ is already resolved, $z$ does not value any goods in $\struct{u}$. Hence the allocation of goods in $\struct{u}$ does not affect the right hand side of the inequality in the claim.

Other than the allocation of goods in $\struct{u}$, the only change that happens in the algorithm is that the prior bundle $\hat{A}_u$ is possibly transferred to $f_u$. Thus if $\hat{A}_u$ is not transferred to $f_u$, or $f_u \not \in S$, then $z$'s value for the aggregate bundle $\cup_{w \in S} A_w$ does not change, and the claim holds.

Thus, $f_u \in S$, $u \not \in S$, and the only relevant change that occurs in the iteration is that the bundle $\hat{A}_u$ is transferred to agent $f_u$. Note that since $f_u \in S$, $f_u$ does not hold any goods from $\struct{z}$, and hence $\hat{A}_u$ does not contain any goods from $\struct{z}$. Now in the allocation $\hat{A}$, consider the subset $\hat{S} = S \cup \{u\}$. Then

\begin{align*}
v_z(A_z) = v_z(\hat{A}_z) & \ge v_z\left(\bigcup_{w \in \hat{S}} \hat{A}_w\right) \qquad \text{(by the induction hypothesis)}\\
    & = v_z\left(\hat{A}_u \cup \bigcup_{w \in S} \hat{A}_w \right) \qquad \text{(since $\hat{S} = S \cup \{u\}$)}\\
    & = v_z\left(\bigcup_{w \in S} A_w \right) \, ,
\end{align*} 
proving the claim. 
\end{proof}

The next few results show that a good that is valued by an agent $z$ cannot be very far from $z$. The next proposition simply shows that a good held by an agent $z$ can only be moved when $\struct{z}$ is resolved, to agent $f_z$.

\begin{proposition}
For an agent $u$, let $\hat{A}$ and $A$ be the allocations before and after $\struct{u}$ is resolved. If for some agent $z$ and good $g$, $g \in \hat{A}_z$ but $g \not \in A_z$, then $z = u$ and $g \in A_{f_u}$. Thus in a phase, any good allocated before the phase began can only be transferred from the root of a structure to its favourite neighbour.
\label{prop:c-moveone}
\end{proposition}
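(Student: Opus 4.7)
The plan is a direct case analysis of Algorithm~\ref{alg:chromaticEFX} applied to the iteration in which $\struct{u}$ is resolved, tracking exactly when an already-allocated good can leave some agent's bundle.

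First I would observe that when $\struct{u}$ is resolved, the only bundles modified are those of agents in $\struct{u}$; all agents $z \notin \struct{u}$ satisfy $A_z = \hat{A}_z$, so such a $z$ cannot witness $g \in \hat{A}_z \setminus A_z$. This reduces the claim to agents in $\struct{u}$.

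Next I would go through the three branches of the conditional in Algorithm~\ref{alg:chromaticEFX}. For every ordinary neighbour $w \in N_u \cap R^i \setminus \{f_u\}$, the only update is $A_w \leftarrow A_w \cup T_{u,w}$, which only adds goods. In the branch of Line~\ref{line:c-condn2-leftovers} and in the branch of Line~\ref{line:c-condn3-allok}, both $A_u$ and $A_{f_u}$ are updated by unions only, again never removing a good. Hence in these branches $\hat{A}_z \subseteq A_z$ for every $z \in \struct{u}$, and the hypothesis of the proposition cannot hold. The only remaining case is the branch of Line~\ref{line:c-condn1-S}, where the update $A_{f_u} \leftarrow A_{f_u} \cup A_u \cup \bar{S}_{u,f_u} \cup \lo_u$ is followed by $A_u \leftarrow S_{u,f_u}$. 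All right-neighbours still only gain goods, so again $z \neq u$ is impossible; and for $z = u$, any good $g \in \hat{A}_u \setminus A_u$ must be one whose prior possessor lost it in this step, namely $u$ itself, and by the preceding union it is placed in $A_{f_u}$. Thus $z = u$ and $g \in A_{f_u}$, as required.

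For the last sentence, I would iterate over the inner loop of a phase. At each iteration some $\struct{u}$ is resolved, and the movements of already-allocated goods consist exactly of the transfer $\hat{A}_u \to A_{f_u}$ described above (and only then; the newly allocated edges of $E_{u,w}$ for $w \in N_u \cap R^i$ were unallocated at the start of the iteration and thus do not qualify as ``already allocated before the phase began''). Composing these per-iteration statements yields the claim that, within the phase, any good allocated before the phase began can only be moved from the root of a structure to that structure's favourite neighbour.

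There is no real obstacle here: the proposition is essentially a bookkeeping statement about Algorithm~\ref{alg:chromaticEFX}. The only mildly delicate point is making explicit that ordinary neighbours and agent $u$ in the two ``$\cup$-only'' branches cannot lose goods, so that the single ``transfer'' line in Line~\ref{line:c-condn1-S} is genuinely the unique source of a removal.
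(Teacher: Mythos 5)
Your proof is correct and follows essentially the same route as the paper: inspect the single iteration resolving $\struct{u}$, note that all updates are unions except the reassignment in Line~\ref{line:c-condn1-S}, where $\hat{A}_u$ passes to $f_u$, and then compose over the iterations of a phase. The paper's proof is terser but adds one remark you omit (harmless for the statement as worded, but used later): since the recipient $f_u$ lies in $R^i$ and hence is never a root in the same phase, a good can be transferred at most once per phase.
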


\begin{proof}
As noted, when $\struct{u}$ is resolved, the goods in the structure are allocated to the agents in the structure. If Line~\ref{line:c-condn1-S} is executed, then additionally the bundle $A_u$ is transferred to agent $f_u$ (and agent $u$ gets the bundle $S_{u,f_u})$. Thus the only way that a good that is assigned to an agent before $\struct{u}$ is resolved can be moved, is if it was assigned to $u$ and then is transferred to agent $f_u$. The proposition follows. 

Since a root only appears once in a phase, a good once transferred in a phase cannot be transferred again, and hence any good allocated prior to a phase can only be transferred once, from the root of a structure to its favourite neighbour. 
\end{proof}

\begin{claim}
Let $A$ be the allocation at Line~\ref{line:c-innerloop1}. If agent $z$ values a good $g$ and $g \in A_w$, then $\dist(z,w) \le c(w)$ along allocated edges.
\label{clm:c-dist-easy}
\end{claim}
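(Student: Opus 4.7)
The plan is to trace the ownership history of $g$ back from its current holder $w$ to the very first agent who received it, and bound both the length of this trace and the distance it traverses. Since $z$ values $g$, we may write $g \in E_{z, z'}$ for some neighbour $z'$, and necessarily $c(z) \ne c(z')$ because the two endpoints of any edge have distinct colours.

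Setting up the chain: let $w_0 := w$. By Proposition~\ref{prop:c-moveone}, whenever the holder of $g$ changes, it does so by transferring from the root of some structure to that root's favourite neighbour, and this occurs at most once per phase. Stepping backwards we obtain a sequence of owners $w_k, w_{k-1}, \ldots, w_1, w_0 = w$, where for each $i$ the good $g$ was transferred from $w_{i+1}$ to $w_i = f_{w_{i+1}}$ when $\struct{w_{i+1}}$ was resolved, and $w_k$ is the first agent to receive $g$. Because $g$ is first allocated when $\struct{z_0}$ is resolved, with $z_0 \in \{z, z'\}$ the endpoint of smaller colour, and only agents in $\struct{z_0}$ receive goods in that iteration, we must have $w_k \in \{z, z'\}$.

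Two observations complete the argument. First, since $w_i = f_{w_{i+1}} \in R^{c(w_{i+1})}$, colours strictly increase along the chain: $c(w_i) \ge c(w_{i+1}) + 1$, and telescoping gives $k \le c(w) - c(w_k)$. Second, when $\struct{w_{i+1}}$ is resolved the entire set $E_{w_{i+1}, w_i}$ gets allocated, so $w_{i+1}$ and $w_i$ are joined by an allocated edge; concatenating these edges produces a walk of length $k$ from $w_k$ to $w$ along allocated edges. If $w_k = z'$ we prepend the allocated edge $g$ itself to obtain a walk from $z$ to $w$ of length $k+1$; if $w_k = z$ the walk already starts at $z$.

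Combining, if $w_k = z$ then $\dist(z, w) \le k \le c(w) - c(z) \le c(w) - 1$; if $w_k = z'$ then $\dist(z, w) \le k + 1 \le c(w) - c(z') + 1 \le c(w)$, using $c(z') \ge 1$. I do not anticipate a deep obstacle: the main point requiring care is invoking Proposition~\ref{prop:c-moveone} cleanly so that the backwards trace is well defined and captures \emph{every} movement of $g$, and verifying that each edge on the constructed walk is genuinely allocated by the time Line~\ref{line:c-innerloop1} is reached.
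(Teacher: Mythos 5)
Your overall strategy is the same as the paper's: use \Cref{prop:c-moveone} to trace the good's movement (one rightward hop per phase, each hop along an edge set that is allocated in that very iteration), and add a short initial hop inside the structure where $g$ is first allocated. However, there is a genuine error in the step where you pin down the first owner. You assert that since only agents in $\struct{z_0}$ receive goods when $\struct{z_0}$ is resolved, the first recipient $w_k$ must lie in $\{z,z'\}$. That inference fails: $\struct{z_0}$ contains \emph{all} right neighbours of $z_0$, and in the branch of Line~\ref{line:c-condn1-S} the favourite neighbour $f_{z_0}$ receives $\lo_{z_0}$, which contains the leftover bundles $\bar{T}_{z_0,x}$ of every ordinary neighbour $x$. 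So a good $g \in E_{z_0,z_1}$ with $z_1$ an ordinary neighbour can be initially allocated to $f_{z_0}$, a vertex that is neither endpoint of $g$. Your concluding case analysis ($w_k = z$ or $w_k = z'$) therefore does not cover all situations, and in particular the bound ``$\dist \le k+1 \le c(w) - c(z') + 1$'' is not available when $w_k = f_{z_0}$.

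The gap is fixable, and the fix is exactly the accounting the paper uses: all you can say is $w_k \in \struct{z_0}$, hence $\dist(z, w_k) \le 2$ along the edges of $\struct{z_0}$ (which are allocated in that iteration), giving $\dist(z,w) \le 2 + k \le 2 + c(w) - c(w_k)$. In the two-hop case $w_k$ is a right neighbour of $z_0$, so $c(w_k) \ge c(z_0) + 1 \ge 2$ and the total is at most $c(w)$; in the cases $w_k = z$ or $w_k$ adjacent to $z$ your original estimates go through with $c(w_k) \ge 1$. With that case added, your colour-telescoping bound $k \le c(w) - c(w_k)$ (a clean substitute for the paper's phase counting) yields the claim; as written, though, the proof has a hole precisely at the transfer of $\lo_{z_0}$ to $f_{z_0}$.
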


\begin{proof}
By~\Cref{prop:c-moveone}, a good can only be transferred rightward, from an agent being resolved to its favourite neighbour. If $c(w) = 1$, then once $\struct{w}$ is resolved, $w$'s allocation does not change. Thus if $g \in A_w$ and $c(w)=1$, $w$ received this good in phase 1 when $\struct{w}$ was resolved. Since $z$ values $g$, $z \in N_w$, and there is path of length 1 between $z$ and $w$ along the edge set $E_{w,z}$. 

Now suppose $c(w) \ge 2$. Consider the phase when $g$ is initially allocated, say to an agent $w'$ when $\struct{u}$ is resolved. Since $z$ values $g$, $z \in \struct{u}$. Hence $z$ and $w'$ have a path of length 2 along the edges of $\struct{u}$. From~\Cref{prop:c-moveone}, in each subsequent phase, $g$ is transferred from the root of a structure to its favourite neighbour. Suppose $g$ is allocated to $w$ in phase $k$. Then $w$ must be a right neighbour for the agent being resolved, and hence the colour $c(w) \ge k+1$. There are at most $k-1$ phases after the phase $g$ is initially allocated. Thus there is a path of length $k-1+2 \le c(w)$ along allocated edges from $w$ to $z$.
\end{proof}

\begin{claim}
Let $A$ be the allocation at Line~\ref{line:c-innerloop1}. Let there be agents $u$, $w$, and good $g \in \struct{u} \cap A_w$. Then $\dist(u,w) \le c(w)-c(u)$ along allocated edges.
\label{clm:c-dist-splgoods}
\end{claim}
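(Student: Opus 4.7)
The plan is to trace the trajectory of the good $g$ from the moment it is first allocated up to its current owner $w$, and to exhibit an explicit path of currently-allocated edges from $u$ to $w$ of length at most $c(w)-c(u)$. Once an edge is allocated it is never released, so it suffices to know that every edge on the path was allocated at or before the current moment.

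First I would handle the initial allocation of $g$. Because $g \in \struct{u}$, the edge $g$ has $u$ as one endpoint and some $v \in N_u \cap R^{c(u)}$ as the other, and it is first allocated during phase $c(u)$ when $\struct{u}$ is resolved. Inspection of Algorithm~\ref{alg:chromaticEFX} shows that in this iteration every edge of $\struct{u}$ is assigned to a vertex in $\{u\}\cup(N_u\cap R^{c(u)})$, so $g$'s initial holder $w_0$ lies in this set; moreover, after the iteration $u$ is linked to each vertex of $\{u\}\cup(N_u\cap R^{c(u)})$ by at least one allocated edge (the edges of $\struct{u}$ itself). Hence $\dist(u,w_0)\le 1$, with $\dist(u,w_0)=0$ precisely when $w_0=u$; in the other case $c(w_0) \ge c(u)+1$.

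Next I would iterate Proposition~\ref{prop:c-moveone}: in every subsequent phase $g$ can only move from the root of a resolved structure to that root's favourite neighbour. The root of a structure resolved in phase $k$ has colour exactly $k$, and any favourite neighbour of a vertex $x$ lies in $N_x \cap R^{c(x)}$, so its colour is strictly greater than $c(x)$. Therefore the sequence of owners of $g$, say $w_0, w_1=f_{w_0}, w_2=f_{w_1}, \ldots, w_r = w$, has strictly increasing colours, and each successive transfer edge lies in $E_{w_j,w_{j+1}} \subseteq \struct{w_j}$, which is fully allocated by the end of phase $c(w_j)$.

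Assembling the pieces yields a path $u, w_0, w_1, \ldots, w_r = w$ of allocated edges of length at most $1+r$ (or $r=0$ in the degenerate subcase $w_0=u$, which by monotonicity of colour forces no further movement and $w=u$). Strict colour increase along the trajectory gives $c(w) \ge c(w_0) + r \ge c(u) + 1 + r$ when $w_0 \neq u$, and $c(w)=c(u)$ when $w_0 = u$; in either case $\dist(u,w) \le c(w) - c(u)$ along allocated edges, as required. The main care point, which I do not expect to be a real obstacle but does require attention, is tracking when each edge on the exhibited path first becomes allocated, so that the path is entirely allocated at the time the claim is invoked; the observation that allocations are only ever extended, never retracted, handles this cleanly.
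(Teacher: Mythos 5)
Your proof is correct and follows essentially the same route as the paper: the good is first allocated within $\struct{u}$ at distance at most one from $u$, and \Cref{prop:c-moveone} confines every later move to a root-to-favourite-neighbour transfer along edges allocated at that moment, yielding a $u$--$w$ path of allocated edges of length at most $c(w)-c(u)$. The only cosmetic difference is the bookkeeping: you bound the number of transfers by the strict increase of colour along the chain of owners, while the paper counts at most one transfer per phase after the phase $c(u)$ in which $g$ is first allocated --- the two counts give the same bound.
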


\begin{proof}
Since $g \in \struct{u}$, when $\struct{u}$ is resolved, it is assigned either to $u$ or to a neighbour to the right. Further any reassignment of $g$ transfers it to a neighbour further to the right. Hence if $u \neq w$, then $w$ must be to the right of $u$, hence $c(w) > c(u)$.

Now when $g$ is initially allocated in phase $c(u)$, it is allocated to an agent $w'$ that is at a distance 1 from $u$ along the edges $E_{u,w'}$. Suppose $g$ is allocated to agent $w$ in phase $k$. Then $c(w) > k$, and there are at most $k-c(u)$ phases after the phase when $g$ is initially allocated. Thus there is a path of length $k-c(u)+1 \le c(w)-c(u)$ along allocated edges from $u$ to $w$.
\end{proof}


\begin{lemma}
    Let $A$ be the allocation at Line~\ref{line:c-innerloop1}. Then if $w$ envies $u$, then $u$ is a resolved vertex, and $w = f_u$ (i.e., $w$ is $u$'s favourite vertex in $\struct{u}$).
    \label{lem:c-newenvy}
\end{lemma}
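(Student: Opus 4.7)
The plan is to prove the lemma by induction on the iterations of the inner \textbf{for} loop. The base case is the initial empty allocation, where there is no envy. For the inductive step, let $\hat{A}$ denote the allocation at Line~\ref{line:c-innerloop1} just before $\struct{u}$ is resolved, and $A$ the allocation at Line~\ref{line:c-innerloop1} of the subsequent iteration. By the inductive hypothesis, in $\hat{A}$ the only envy edges are $f_{u'}\to u'$ for resolved $u'$; I need to show the same holds for $A$, possibly augmented by the new edge $f_u\to u$.

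The key observation is that only bundles of agents in $\struct{u}$ are modified in this iteration, together with the possible transfer of $\hat{A}_u$ into $A_{f_u}$ in Line~\ref{line:c-condn1-S}. Since every good in $\struct{u}$ is valued only by agents in $\struct{u}$, and each agent's value for its own bundle is non-decreasing, any new envy must be either (i)~between two agents in $\struct{u}$, or (ii)~from some $z\notin\struct{u}$ towards $f_u$, the only agent in $\struct{u}$ that can absorb goods (namely $\hat{A}_u$) that outside agents may value.

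Case (i) mirrors Lemma~\ref{lem:b-envy}. An ordinary neighbour $w\neq f_u$ receives its preferred part $T_{u,w}$ of $E_{u,w}$, had no prior envy inside $\struct{u}$ (all those agents were unresolved), and cancellability (Proposition~\ref{prop:cancellable-1}) promotes the one-bundle preference to the multi-bundle comparison, so $w$ envies no one in $\struct{u}$. Agent $f_u$'s only possible new envy is the permitted EFX envy towards $u$, since in each sub-case of the \textbf{if} block it receives either its preferred $T_{u,f_u}$ or the complement $\bar{S}_{u,f_u}$ of a partition that $f_u$ itself cut. The delicate subcase is that $u$ must not envy $f_u$ when Line~\ref{line:c-condn1-S} executes. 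For this I would argue that $\hat{A}_{f_u}$ cannot contain any good incident to $u$: such a good $g\in E_{u,z}$ with $c(z)<c(u)$ would lie in $\struct{z}$, so by Claim~\ref{clm:c-dist-splgoods} there is an allocated-edge path of length at most $c(f_u)-c(z)$ from $z$ to $f_u$; combined with $g$ itself (the edge $z$-$u$) and an edge in $E_{u,f_u}$, this yields a cycle of length at most $c(f_u)-c(z)+2\le t+1<2t-1$ once $t\ge 3$ (the case $t=2$ is ruled out by bipartiteness). With $\hat{A}_{f_u}$ thus invisible to $u$, the threshold of Line~\ref{line:c-condn1-S} directly gives $v_u(A_u)>v_u(A_{f_u})$.

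Case (ii) is the main obstacle. Suppose $z\notin\struct{u}$ newly envies $f_u$. Since $z$ values nothing in $\bar{S}_{u,f_u}\cup\lo_u\subseteq\struct{u}$, the envy is equivalent to $v_z(\hat{A}_z)<v_z(\hat{A}_u\cup\hat{A}_{f_u})$ at Line~\ref{line:c-innerloop1}. Applying Claim~\ref{clm:c-envy-union2} to $z$ and the unresolved set $\{u,f_u\}$ tells us that $z$ is already resolved and some good $g^{\star}\in\struct{z}$ lies in $\hat{A}_u\cup\hat{A}_{f_u}$. Moreover, the inductive hypothesis implies $z$ did not envy $u$ or $f_u$ individually, so each of $\hat{A}_u$ and $\hat{A}_{f_u}$ must contribute a good valued by $z$. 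Claim~\ref{clm:c-dist-splgoods} bounds the distance from $z$ to the holder of $g^{\star}$ by $c(\cdot)-c(z)$ along allocated edges, while the other good valued by $z$ is either itself in $\struct{z}$ (giving a second $c(\cdot)-c(z)$ bound) or in some $\struct{z'}$ with $c(z')<c(z)$, where Claim~\ref{clm:c-dist-splgoods} combined with the edge $z$-$z'$ gives a $c(\cdot)-c(z')+1$ bound. Closing the walk with the edge $u$-$f_u$ yields a simple cycle in the multi-graph. The main bookkeeping hurdle, which I expect to be the crux, is to verify that in every sub-case (which of $\hat{A}_u,\hat{A}_{f_u}$ contains $g^{\star}$, and whether the second good lies in $\struct{z}$ or in some $\struct{z'}$), the cycle length is at most $2t-2$, contradicting the girth hypothesis and completing the induction.
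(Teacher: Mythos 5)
Your proposal is correct and follows essentially the same route as the paper's proof: the same induction, the same localization of new envy to $\struct{u}$ plus the possible outside envy towards $f_u$ created by the transfer of $\hat{A}_u$, and the same use of Claim~\ref{clm:c-envy-union2}, Claim~\ref{clm:c-dist-splgoods} and the girth hypothesis; your reorganized endgame (splitting on whether the second good valued by $z$ lies in $\struct{z}$ or in some $\struct{z'}$ to the left, rather than the paper's split on $c(u')\ge 2$ versus $c(u')=1$ via Claim~\ref{clm:c-dist-easy}) does check out, giving cycle length at most $2t-2$ in every sub-case. The one point to add is inside $\struct{u}$: unlike the bipartite setting, an ordinary neighbour's prior bundle $\hat{A}_w$ may contain goods incident to $u$ (and $\hat{A}_u$ may contain goods $w$ values when it is handed to $f_u$), so ``mirrors Lemma~\ref{lem:b-envy}'' does not by itself rule out $u$ envying an ordinary $w$ (or $w$ envying $f_u$ in the transfer case); you need exactly the girth/distance argument you already give for $u$ versus $f_u$ (or Claim~\ref{clm:c-envy-union1}) there as well, after which the proof goes through.
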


\begin{proof}
    We will prove the lemma by induction. For the base case, the initial allocation is empty and trivially satisfies the two properties. Now fix any vertex $u$. Let $\hat{A}$ be the allocation before the structure rooted at $u$ is resolved, and $A$ be the allocation after resolving $\struct{u}$.
    
    None of these vertices in $\struct{u}$ are resolved in $\hat{A}$, and hence by the induction hypothesis, none of these are envied in $\hat{A}$. When $\struct{u}$ is resolved, only the allocation to agents in $\struct{u}$ is modified, while the allocation to other agents is unchanged. The value of agents in $\struct{u}$ does not decrease. Hence, any new envy edge must be towards agents in $\struct{u}$. 

    Consider an ordinary neighbour $w$ in $\struct{u}$. Agent $w$ is not envied in $\hat{A}$. It receives its preferred bundle $T_{u,w}$; this is only valued by $u$ and $w$, hence no agent other than $u$ will envy $w$. To see that $u$ also will not envy $w$, we first claim that $v_u(\hat{A}_w) = 0$, i.e., $u$ does not value $w$'s bundle in $\hat{A}$. To prove this, assume for a contradiction that $u$ values $w$'s bundle. Then by~\Cref{clm:c-dist-easy}, there exists a $u$-$w$ path along allocated edges of length at most $t$. Including the (unallocated) edges $E_{u,w}$, this gives a cycle of length $t+1$. If $t=2$, this gives us a 3-cycle, which would mean that the graph is not 2-colourable. If $t>2$, then $t+1 < 2t-1$, but we assume that the graph has girth at least $2t-1$. In either case we get a contradiction. Thus, $v_u(\hat{A}_w) = 0$.

    Further, $v_u(A_u) \ge v_u(T_{u,w})$. This is because agent $u$ prefers $S_{u,f_u}$ over $T_{u,w}$, and always has the choice of keeping $S_{u,f_u}$. Hence, it ends up with a bundle of value at least that of $T_{u,w}$. Then since $v_u(\emptyset) \ge v_u(\hat{A}_w)$ and $v_u(A_u) \ge v_u(T_{u,w})$, since valuations are cancellable, $v_u(A_u) \ge v_u(A_w)$. Thus, an ordinary neighbour $w \in \struct{u}$ is not envied.

    Now consider agent $u$. As shown, agent $u$ is not envied in $\hat{u}$. While resolving $\struct{u}$, the only goods agent $u$ can possibly receive are goods in $\struct{u}$, and hence no agent outside $\struct{u}$ will envy $u$. Within $\struct{u}$, all ordinary  neighbours receive their preferred bundle, and hence will not envy $u$. Thus the only agent that could possibly envy $u$ after resolving $\struct{u}$ is $f_u$.

    Now if $f_u$ gets its preferred bundle $T_{u,f_u}$ (in Lines~\ref{line:c-condn2-leftovers} and~\ref{line:c-condn3-allok}), it does not envy $u$. If agent $f_u$ gets $\bar{T}_{u,f_u}$ (in Line~\ref{line:c-condn1-S}), then this is an EFX partition for $f_u$ (since $f_u$ cut s$E_{u,f_u}$), and agent $u$ only gets $T_{u,f_u}$. In this case, $f_u$ envies $u$, but the envy is EFX.

    Lastly, consider agent $f_u$. This case is more complicated than the earlier cases, since the bundle $\hat{A}_u$ previously held by $u$ could be transferred to $f_u$ in Line~\ref{line:c-condn1-S}. We will show however that no agent envies $f_u$.

     First, let's consider agents in $\struct{u}$. Again, the ordinary neighbours $w$ did not envy $f_u$ earlier, and from the goods $E_{u,w}$ they get their preferred bundle $T_{u,w}$. Hence, since valuations are cancellable, they will not envy $f_u$. 

     Agent $u$, as argued previously, by~\Cref{clm:c-dist-easy}, does not value $\hat{A}_{f_u}$, else there is either a 3-cycle or a cycle of length strictly smaller than $2t-1$. Further agent $u$ gets a bundle of value at least that of $S_{u,f_u}$, and hence will not envy $f_u$.

     For the last case, consider any agent $u' \not \in \struct{u}$. In allocation $\hat{A}$, $u'$ does not envy $u$ or $f_u$. Since $u'$ does not value goods in $E_{u,f_u}$, if it envies $A_{f_u}$, clearly (i) it must value both $\hat{A}_u$ and $\hat{A}_{f_u}$, and (ii) it must envy $\hat{A}_u \cup \hat{A}_{f_u}$. 
     
     Suppose $c(u') \ge 2$. Then from (ii), either $u$ or $f_u$ holds a good from $\struct{u'}$ (say $u$). By~\Cref{clm:c-dist-splgoods}, $u$ is then at distance $c(u) - c(u')$ from $u'$ along allocated edges. From (i) and~\Cref{clm:c-dist-easy}, $f_u$ is at distance at most $c(f_u)$ from $u'$ along allocated edges. Thus, there is $u$-$f_u$ path along allocated edges of length $c(u) + c(f_u) - c(u')$ $\le 2t-3$ (since $c(u) \le t-1$ and $c(u') \ge 2$). But then including the (unallocated) edges $E_{u,f_u}$, there is a cycle of length $2t-2$, which is a contradiction, since we assume the graph has girth at least $2t-1$.

     Finally, suppose $c(u') = 1$. Then note that all the goods that $u'$ values are in $\struct{u'}$. Since both $u$ and $f_u$ possess goods that $u'$ values, by~\Cref{clm:c-dist-splgoods}, they are at distance $c(u)-1$ and $c(f_u)-1$ respectively from $u'$ along allocated edges. Since $c(u) \le t-1$, there is a path of length $2t-3$ from $u$ to $f_u$ along allocated edges. But then including the (unallocated) edges $E_{u,f_u}$, there is a cycle of length $2t-2$, which is again a contradiction. Thus, agent $f_u$ is not envied in the allocation $A$. This completes the proof.
\end{proof}

We now can easily prove that the allocation obtained is EFX.

\begin{theorem}
    \Cref{alg:chromaticEFX} obtains an EFX allocation for agents with cancellable valuations in polynomial time.
    \label{thm:c-main}
\end{theorem}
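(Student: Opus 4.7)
The plan is to derive the theorem directly from Lemma~\ref{lem:c-newenvy}, which already does most of the work by characterising where envy can occur. My proof will have two parts: a short runtime argument and a structural argument for the EFX property.

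For the runtime, I will observe that each phase is a single pass over $C^i$, each iteration of the inner loop runs $\cac{w}{E_{u,w}}$ once per right-neighbour (polynomial by Proposition~\ref{prop:cancellable-2agent-efx} since valuations are cancellable), performs $O(1)$ comparisons of cancellable values, and does $O(n+m)$ bookkeeping. Summing over the at most $n$ choices of $u$ across the $t-1 \le n$ phases yields a polynomial bound.

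For the EFX property, the main step is to apply Lemma~\ref{lem:c-newenvy} to the final allocation. Conceptually, the final allocation is precisely the allocation that would be read at Line~\ref{line:c-innerloop1} immediately after the last iteration finishes, so the lemma's conclusion applies: the only possible envy edge is from $f_u$ to a resolved root $u$, for some structure $\struct{u}$. I then need to verify that each such envy is EFX. Tracing the algorithm, the only branch that creates envy from $f_u$ toward $u$ is Line~\ref{line:c-condn1-S}: in the other two branches ($u$ and $f_u$ prefer different bundles, or $u$ takes the left-overs), agent $f_u$ receives its preferred bundle $T_{u,f_u}$ out of $E_{u,f_u}$ and did not envy $u$ before the iteration, so cancellability gives non-envy (this is exactly the argument used inside the proof of Lemma~\ref{lem:c-newenvy}). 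Thus I only need to analyse Line~\ref{line:c-condn1-S}.

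In that branch, the root $u$ ends up with exactly the bundle $A_u = S_{u,f_u}$, and since $u$'s allocation is never modified after $u$ is resolved, this remains $u$'s final bundle. Meanwhile $f_u$ receives a superset of $\bar{S}_{u,f_u}$. The key observation is that the partition $(S_{u,f_u}, \bar{S}_{u,f_u})$ was produced by $\cac{f_u}{E_{u,f_u}}$, hence it is an EFX partition for two identical agents with valuation $v_{f_u}$; in particular, for every $x \in S_{u,f_u}$,
\[
v_{f_u}(\bar{S}_{u,f_u}) \;\ge\; v_{f_u}(S_{u,f_u} \setminus \{x\}).
\]
Since $A_{f_u} \supseteq \bar{S}_{u,f_u}$, monotonicity gives $v_{f_u}(A_{f_u}) \ge v_{f_u}(\bar{S}_{u,f_u}) \ge v_{f_u}(A_u \setminus \{x\})$ for every $x \in A_u$, proving the $f_u$-to-$u$ envy is EFX. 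Combined with Lemma~\ref{lem:c-newenvy}, which rules out every other envy edge, the returned allocation is EFX.

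The only delicate point, and what I expect to be the main obstacle were the preceding lemma not already available, is justifying that no \emph{new} envy is introduced later in the algorithm when the bundle $\hat{A}_u$ of a resolved root is transferred to $f_u$ in Line~\ref{line:c-condn1-S} of some later structure; this is exactly the non-trivial geometric argument encapsulated by Claims~\ref{clm:c-envy-union2}, \ref{clm:c-dist-easy}, and \ref{clm:c-dist-splgoods}, which convert a hypothetical new envy into a short cycle contradicting the girth assumption $2t-1$. Since that work is already done inside Lemma~\ref{lem:c-newenvy}, the theorem follows by the short argument above.
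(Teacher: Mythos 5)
Your overall route is the paper's: Lemma~\ref{lem:c-newenvy} does the heavy lifting, and what remains is to check that any residual envy from $f_u$ to a resolved root $u$ is EFX, which comes from the fact that $f_u$ cut $E_{u,f_u}$ and $u$ holds only $S_{u,f_u}$ after Line~\ref{line:c-condn1-S}. There is, however, a concrete gap in how you transfer this from the iteration in which $\struct{u}$ is resolved to the \emph{final} allocation, which is the allocation you apply the lemma to. Your key inequality rests on the claim that in the final allocation $A_{f_u} \supseteq \bar{S}_{u,f_u}$. That containment can fail: $f_u$ lies to the right of $u$, so in a later phase $f_u$ may itself be resolved as the root of a structure, and if Line~\ref{line:c-condn1-S} fires there, $f_u$'s entire current bundle --- including $\bar{S}_{u,f_u}$ --- is transferred to $f_{f_u}$, and $f_u$ is left with only $S_{f_u,f_{f_u}}$. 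The same issue touches your dismissal of the other two branches: ``$f_u$ does not envy $u$ at the end of that iteration'' does not by itself say anything about the final allocation.

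Both holes are closed by one observation that the paper records as a property of the algorithm: a resolved root's bundle is never modified again, and every agent's value for its own bundle is nondecreasing over the entire run --- in particular, when Line~\ref{line:c-condn1-S} later strips $f_u$ of its bundle, it does so only because $f_u$ strictly prefers its replacement. Hence $v_{f_u}(A^{\mathrm{final}}_{f_u}) \ge v_{f_u}(\bar{S}_{u,f_u}) \ge v_{f_u}(S_{u,f_u} \setminus \{x\})$ for every $x \in S_{u,f_u}$, and non-envy established at the end of an iteration persists thereafter. With that bridging sentence added, your direct argument on the final allocation is correct and essentially coincides with the paper's proof; the paper instead packages the persistence step as an induction over iterations, showing that each pre-existing envy edge (necessarily toward a resolved, hence frozen, root) remains EFX because the envier's value never decreases. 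The runtime argument is fine.
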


\begin{proof}
    The proof for the running time for~\Cref{alg:chromaticEFX} is obtained easily, since for cancellable valuations $\cac{u}{S}$ runs in polynomial time, and the other steps are straightforward. 

    We now show that the algorithm obtains an EFX allocation. As before, the proof is by induction on the iterations of the inner for loop. For the base case, the statement clearly holds for the empty allocation. Then consider an agent $u$, and let $\hat{A}$ and $A$ be the allocations before and after $\struct{u}$ is resolved.

    Consider first an envy edge from agent $w$ to agent $u'$ in allocation $\hat{A}$. We will show that this envy remains EFX in allocation $A$ as well. By~\Cref{lem:c-newenvy}, agent $u'$ must be resolved. No agent in $\struct{u}$ is resolved in $\hat{A}$, hence $u' \not \in \struct{u}$. Since only agents in $\struct{u}$ have their allocations modified, agent $u'$'s allocation is not modified. Agent $w$ is either also not in $\struct{u}$ (in which case its allocation also does not change), or is in $\struct{u}$ (in which case its value does not decrease). In either case, either $w$ does not envy $u'$ in $A$, or the envy remains EFX.

    By~\Cref{lem:c-newenvy}, the only envy edge that may be in $A$ but not in $A'$ is from $f_u$ to $u$. This happens in Line~\ref{line:b-condn1-S}, when $u$ gets $S_{u,f_u}$. But then agent $f_u$ gets $\bar{S}_{u,f_u}$. Since agent $f_u$ cut the bundle $E_{u,f_u}$, and agent $u$ has no goods besides $S_{u,f_u}$, the envy is EFX. 
\end{proof}

\section{Conclusion}

Our paper extends the work on EFX existence to important classes of multi-graphs. We view our results as a significant extension over the work by Christodolou et al.~\cite{ChristodolouFKS23}, since their results only allow for a single edge between adjacent vertices. Our work also shows that the well-known cut-and-choose paradigm can be leveraged to obtain results in fairly general settings. 

Clearly, the setting for general multi-graphs remains open. We also find it interesting that the case of multi-graphs with 3 agents does not lend itself to a simpler algorithm than for general 3 agents. Coming up with a simpler algorithm here may be crucial to extending our work to general multi-graphs.

\section*{Acknowledgments}
The authors acknowledge support from the Department of Atomic Energy, Government
of India, under project no. RTI4001

\bibliographystyle{alpha}
\bibliography{References}

\newcommand{\etalchar}[1]{$^{#1}$}
\begin{thebibliography}{CKM{\etalchar{+}}19}

\bibitem[AAC{\etalchar{+}}23]{AkramiACGMM23}
Hannaneh Akrami, Noga Alon, Bhaskar~Ray Chaudhury, Jugal Garg, Kurt Mehlhorn, and Ruta Mehta.
\newblock {EFX:} {A} simpler approach and an (almost) optimal guarantee via rainbow cycle number.
\newblock In {\em Proceedings of the 24th {ACM} Conference on Economics and Computation, {EC} 2023, London, United Kingdom, July 9-12, 2023}, page~61. {ACM}, 2023.

\bibitem[ADK{\etalchar{+}}24]{AfshinmehrDKMR24}
Mahyar Afshinmehr, Alireza Danaei, Mehrafarin Kazemi, Kurt Mehlhorn, and Nidhi Rathi.
\newblock {EFX} allocations and orientations on bipartite multi-graphs: {A} complete picture.
\newblock {\em CoRR}, abs/2410.17002, 2024.

\bibitem[AFS24]{AmanatidisFS24}
Georgios Amanatidis, Aris Filos{-}Ratsikas, and Alkmini Sgouritsa.
\newblock Pushing the frontier on approximate {EFX} allocations.
\newblock {\em CoRR}, abs/2406.12413, 2024.

\bibitem[AMN20]{AmanatidisMN20}
Georgios Amanatidis, Evangelos Markakis, and Apostolos Ntokos.
\newblock Multiple birds with one stone: Beating 1/2 for {EFX} and {GMMS} via envy cycle elimination.
\newblock {\em Theor. Comput. Sci.}, 841:94--109, 2020.

\bibitem[AS16]{AlonS16}
Noga Alon and Joel~H Spencer.
\newblock {\em {The Probabilistic Method}}.
\newblock John Wiley \& Sons, 2016.

\bibitem[BCFF21]{BergerCFF21}
Ben Berger, Avi Cohen, Michal Feldman, and Amos Fiat.
\newblock (almost full) {EFX} exists for four agents (and beyond).
\newblock {\em CoRR}, abs/2102.10654, 2021.

\bibitem[BCFF22]{BergerCFF22}
Ben Berger, Avi Cohen, Michal Feldman, and Amos Fiat.
\newblock Almost full {EFX} exists for four agents.
\newblock In {\em Thirty-Sixth {AAAI} Conference on Artificial Intelligence, {AAAI} 2022, February 22 - March 1, 2022}, pages 4826--4833. {AAAI} Press, 2022.

\bibitem[CCLW19]{Chan0LW19}
Hau Chan, Jing Chen, Bo~Li, and Xiaowei Wu.
\newblock Maximin-aware allocations of indivisible goods.
\newblock In Edith Elkind, Manuela Veloso, Noa Agmon, and Matthew~E. Taylor, editors, {\em Proceedings of the 18th International Conference on Autonomous Agents and MultiAgent Systems, {AAMAS} '19}, pages 1871--1873, 2019.

\bibitem[CFKS23]{ChristodolouFKS23}
George Christodoulou, Amos Fiat, Elias Koutsoupias, and Alkmini Sgouritsa.
\newblock Fair allocation in graphs.
\newblock In Kevin Leyton{-}Brown, Jason~D. Hartline, and Larry Samuelson, editors, {\em Proceedings of the 24th {ACM} Conference on Economics and Computation, {EC} 2023, London, United Kingdom, July 9-12, 2023}, pages 473--488. {ACM}, 2023.

\bibitem[CGM24]{ChaudhuryGM24}
Bhaskar~Ray Chaudhury, Jugal Garg, and Kurt Mehlhorn.
\newblock {EFX} exists for three agents.
\newblock {\em J. {ACM}}, 71(1):4:1--4:27, 2024.

\bibitem[CKM{\etalchar{+}}19]{CKM+19unreasonable}
Ioannis Caragiannis, David Kurokawa, Herv{\'e} Moulin, Ariel~D Procaccia, Nisarg Shah, and Junxing Wang.
\newblock {The Unreasonable Fairness of Maximum Nash Welfare}.
\newblock {\em ACM Transactions on Economics and Computation}, 7(3):12, 2019.

\bibitem[CKMS21]{ChaudhuryKMS21}
Bhaskar~Ray Chaudhury, Telikepalli Kavitha, Kurt Mehlhorn, and Alkmini Sgouritsa.
\newblock A little charity guarantees almost envy-freeness.
\newblock {\em {SIAM} J. Comput.}, 50(4):1336--1358, 2021.

\bibitem[GHH23]{GoldbergHH23}
Paul~W. Goldberg, Kasper H{\o}gh, and Alexandros Hollender.
\newblock The frontier of intractability for {EFX} with two agents.
\newblock In {\em Algorithmic Game Theory - 16th International Symposium, {SAGT} 2023, Egham, UK, September 4-7, 2023, Proceedings}, volume 14238 of {\em Lecture Notes in Computer Science}, pages 290--307. Springer, 2023.

\bibitem[HGNV24]{PrakashGNV24}
Vishwa~Prakash HV, Pratik Ghosal, Prajakta Nimbhorkar, and Nithin Varma.
\newblock {EFX} exists for three types of agents.
\newblock {\em CoRR}, abs/2410.13580, 2024.

\bibitem[Mah23]{Mahara23}
Ryoga Mahara.
\newblock Existence of {EFX} for two additive valuations.
\newblock {\em Discret. Appl. Math.}, 340:115--122, 2023.

\bibitem[PR20]{PlautR20}
Benjamin Plaut and Tim Roughgarden.
\newblock Almost envy-freeness with general valuations.
\newblock {\em {SIAM} J. Discret. Math.}, 34(2):1039--1068, 2020.

\bibitem[ZM24]{ZengM24}
Jinghan~A Zeng and Ruta Mehta.
\newblock On the structure of envy-free orientations on graphs.
\newblock {\em CoRR}, abs/2404.13527, 2024.

\end{thebibliography}


\end{document}